\theoremstyle{plain}
\newtheorem{theorem}{Theorem}[section]
\newtheorem{proposition}[theorem]{Proposition}
\newtheorem{lemma}[theorem]{Lemma}
\newtheorem{corollary}[theorem]{Corollary}
\newtheorem{definition}[theorem]{Definition}
\theoremstyle{nonumberplain}
\newtheorem{proof}{Proof.}
\renewcommand{\paragraph}[1]{\par\vspace{7pt}\noindent\textbf{#1.}}
\newcommand{\PTIME}{\textnormal{P}}
\newcommand{\NP}{\textnormal{NP}}
\newcommand{\coNP}{\textnormal{coNP}}
\newcommand{\EXP}{\textnormal{EXP}}
\newcommand{\coNEXP}{\textnormal{coNEXP}}
\newcommand{\coTwoNEXP}{\textnormal{co2NEXP}}
\newcommand{\M}{\ensuremath{\mathsf{M}}\xspace}
\newcommand{\GZero}{\ensuremath{\mathsf{G_0}}\xspace}
\newcommand\RBE{\ensuremath{\mathsf{RBE}}\xspace}
\newcommand\RBEs{\ensuremath{\mathsf{RBE}\mathrm{s}}\xspace}
\newcommand{\RBEZero}{\ensuremath{\mathsf{RBE_0}}\xspace}
\newcommand\SORBE{\ensuremath{\mathsf{SORBE}}\xspace}
\newcommand{\ShEx}{\ensuremath{\mathsf{ShEx}}\xspace}
\newcommand{\ShExZero}{\ensuremath{\mathsf{ShEx_0}}\xspace}
\newcommand{\DetShEx}{\ensuremath{\mathsf{DetShEx}}\xspace}
\newcommand{\DetShExZero}{\ensuremath{\mathsf{DetShEx_0}}\xspace}
\newcommand{\DetShExZeroMinus}{\ensuremath{\mathsf{DetShEx_0^\spec{-}}}\xspace}
\newcommand{\kind}{\ensuremath{\mathit{kind}}\xspace}
\newcommand{\F}{\ensuremath{\mathcal{F}}\xspace}
\newcommand{\dom}{\ensuremath{\mathsf{dom}}\xspace}
\newcommand{\ran}{\ensuremath{\mathsf{ran}}\xspace}
\newcommand{\interval}[1]{{\ensuremath{\mathord{\text{\normalfont\fontfamily{lmtt}\selectfont{}#1}}}}}
\newcommand{\NONE}{\interval{0}\xspace}
\newcommand{\ONE}{\interval{1}\xspace}
\newcommand{\MAYBE}{\interval{?}\xspace}
\newcommand{\MANY}{\interval{*}\xspace}
\newcommand{\PLUS}{\interval{+}\xspace}
\newcommand{\spec}[1]{\ensuremath{\text{\fontfamily{ccr}\selectfont #1}}}
\newcommand{\sign}{\ensuremath{\mathsf{sign}}\xspace}
\newcommand{\source}{\ensuremath{\mathsf{source}}\xspace}
\newcommand{\target}{\ensuremath{\mathsf{target}}\xspace}
\newcommand{\lab}{\ensuremath{\mathsf{lab}}\xspace}
\newcommand{\occur}{\ensuremath{\mathsf{occur}}\xspace}
\newcommand{\out}{\ensuremath{\mathsf{out}}\xspace}
\newcommand{\Typing}{\ensuremath{\mathsf{Typing}}\xspace}
\newcommand{\mininflow}{\mathit{min\text{-}inflow}}
\newcommand{\maxinflow}{\mathit{max\text{-}inflow}}
\newcommand{\fin}{\mathit{fin}}
\newcommand{\dotmin}{\mathit{.min}}
\newcommand{\dotmax}{\mathit{.max}}
\newcommand{\shuffle}{\mathbin{|\hspace{-0.1em}|}}
\newcommand{\bigshuffle}{\mathbin{\big|\hspace{-0.1em}\big|}}
\newcommand{\ml}{\mathopen{\{\hspace{-0.2em}|}}
\newcommand{\mr}{\mathclose{|\hspace{-0.2em}\}}}
\newcommand{\dbl}{\mathbin{::}}
\newcommand\C{\ensuremath{\mathcal{C}}\xspace}
\newcommand{\Refine}{\mathit{Refine}}
\newcommand{\minus}{\ensuremath{\mathbin{\setminus}}}
\def\clap#1{\hbox to 0pt{\hss#1\hss}}
\def\mathclap{\mathpalette\mathclapinternal}
\def\mathclapinternal#1#2{%
           \clap{$\mathsurround=0pt#1{#2}$}}
\def\qed {{                
   \parfillskip=0pt        
   \widowpenalty=10000     
   \displaywidowpenalty=10000  
   \finalhyphendemerits=0  
                           %
   \leavevmode             
   \unskip                 
   \nobreak                
   \hfil                   
   \penalty50              
   \hskip.2em              
   \null                   
   \hfill                  
   $\square$
                           %
   \par}}                  
\newcounter{LineCounter@algorithm} 
\newenvironment{BasicCommands@algorithm}{%
}{%
}
\newenvironment{algorithm*}%
{%
\begin{BasicCommands@algorithm}%
\list{}{\itemindent 0em%
        \listparindent\itemindent
        \rightmargin  \leftmargin}%
\item\relax
}{%
\endlist
\end{BasicCommands@algorithm}%
}
  \newcommand{\ResetLineCounter}{%
    \setcounter{LineCounter@algorithm}{0}%
  }%
\def\clap#1{\hbox to 0pt{\hss#1\hss}}
\def\mathclap{\mathpalette\mathclapinternal}
\def\mathclapinternal#1#2{%
           \clap{$\mathsurround=0pt#1{#2}$}}
\title{Containment of Shape Expression Schemas for RDF}
\author{S\l{}awek Staworko\\
  CRIStAL, INRIA LINKS, CNRS\\
  University of Lille 3, France\\
  \url{slawomir.staworko@inria.fr}
  \and Piotr Wieczorek\\
  Institute of Computer Science\\
  University of Wroc\l{}aw, Poland\\
  \url{piotrek@cs.uni.wroc.pl}
}
\begin{document}

\maketitle


\begin{abstract}
  We study the problem of containment for \emph{shape expression schemas}
  (\ShEx) for RDF graphs. We identify a subclass of \ShEx that has a natural
  graphical representation in the form of \emph{shape graphs} and their
  semantics is captured with a tractable notion of \emph{embedding} of an RDF
  graph in a shape graph. When applied to pairs of shape graphs, an embedding is
  a sufficient condition for containment, and for a practical subclass of
  \emph{deterministic shape graphs}, it is also a necessary one, thus yielding a
  subclass with tractable containment. While for general shape graphs a
  \emph{minimal counter-example} i.e., an instance proving non-containment,
  might be of exponential size, we show that containment is
  \EXP-hard and in \coNEXP. Finally, we show that containment for arbitrary \ShEx is
  \coNEXP-hard and in $\coTwoNEXP^\NP$.
\end{abstract}


\section{Introduction}
\label{sec:introduction}

Although RDF has been originally introduced schema-free, it has since then become a
standalone database format and the need for a schema language has been
identified, with new applications of RDF previously reserved to relational and
semi-structured databases~\cite{AGP09,W3CValidationWorkshopReport}. Recently
introduced, and under continuous development by W3C, \emph{shape expression
  schema} (\ShEx) is a formalism for defining languages of valid RDF
graphs~\cite{GPBSSH15,GPSR15,ShEx2W3C,ShExBook17}. \ShEx defines a set of types with each
type defined with a rule describing the admissible types of the outbound
neighborhood of a node. Inspired by versatility of schema languages for
XML~\cite{BeNeVa04,KlScSu03}, the rules of \ShEx are based on regular
expressions. Take for instance the RDF graph in Figure~\ref{fig:rdf-bug-reports}
storing bug reports together with its shape expression schema.
\begin{figure*}[htb]
  \centering
  \small
  \begin{tikzpicture}[>=latex,scale=1.35]
    \path[use as bounding box] (-6.9,-5) rectangle (2,1.8);

    \begin{scope}[xshift=-0.65cm]
    \node (bug1) at (-1,1) {\tt bug${}_1$};
    \node (bug2) at (2.5,1) {\tt bug${}_2$};
    \node (bug3) at (-3,0) {\tt bug${}_3$};
    \node (bug4) at (1,0) {\tt bug${}_4$};
    \node (user1) at (-1.75,-1.75) {\tt user${}_1$};
    \node (user2) at (2.75,-1) {\tt user${}_2$};
    \node (emp1) at (-.2,-1.75) {\tt emp${}_1$};

    \node (d1) at (-2.95,1.45) {\it ``Boom!''};
    \node (d2) at (0.5,1.45) {\it ``Kaboom!''};
    \node (d3) at (-3.1,-1.4) {\it ``Kabang!''};
    \node (d4) at (1.25,-1.5) {\it ``Bang!''};

    \node (n1) at (-3,-2.5) {\it ``John''};
    \node (n2) at (-1.1,-2.75) {\it ``Mary''};
    \node (e2) at (0.7,-2.75) {\it ``m@h.org''};
    \node (n3) at (2,-2) {\it ``Steve''};
    \node (e3) at (3.15,-2.25) {\it ``stv@m.pl''};

    \draw (user1) edge[->] node[above,sloped] {\tt name} (n1);

    \draw (user2) edge[->] node[above,sloped] {\tt name} (n3);
    \draw (user2) edge[->] node[above,sloped] {\tt email} (e3);

    \draw (emp1) edge[->] node[above,sloped] {\tt name} (n2);
    \draw (emp1) edge[->] node[above,sloped] {\tt email} (e2);    

    \draw (bug1) edge[->] node[above,sloped] {\tt related} (bug4);
    \draw (bug1) edge[->] node[above,sloped] {\tt related} (bug3);
    \draw (bug1) edge[->] node[above,sloped] {\tt reportedBy} (user1);
    \draw (bug1) edge[->] node[above,sloped] {\tt reproducedBy} (emp1);
    \draw (bug1) edge[->] node[above,sloped] {\tt descr} (d1);

    \draw (bug2) edge[->] node[above,sloped] {\tt related} (bug4);
    \draw (bug2) edge[->] node[above,sloped] {\tt reportedBy} (user2);
    \draw (bug2) edge[->] node[above,sloped] {\tt descr} (d2);

    \draw (bug3) edge[->] node[above,sloped] {\tt reportedBy} (user1);
    \draw (bug3) edge[->] node[above,sloped] {\tt descr} (d3);

    \draw (bug4) edge[->] node[below,sloped] {\tt reportedBy} (emp1);
    \draw (bug4) edge[->] node[above,sloped] {\tt descr} (d4);
    \end{scope}
  
    \begin{scope}[yshift=0.15cm]
      \node (Bug) at (-6.5, 0.5) {\texttt{Bug}};
      \node (User) at (-7.5, -1.75) {\texttt{User}};
      \node (Emp) at (-5.25, -1.75) {\texttt{Employee}};
      \node (s) at (-6.5, -3) {{\texttt{Literal}}};

      \draw[loop]  (Bug) edge[->] 
                         node[above] {\texttt{related}} 
                         node[below] {\MANY} (Bug); 

      \draw (Bug) edge[->] 
                  node[above,sloped] {\texttt{reportedBy}} 
                  node[below,sloped] {\ONE}
                  (User);
      \draw (Bug) edge[->] 
                  node[above,sloped] {\texttt{reproducedBy}} 
                  node[below,sloped] {\MAYBE}
                  (Emp);

      \draw[bend angle=4] (Bug) edge[->,bend left]
                  node[above,sloped] {\texttt{descr}}
                  node[below,sloped] {\ONE}
                  (s);

      \draw[bend angle=25] (User) edge[->, bend left] 
                   node[above,sloped,pos=0.35] {\texttt{name}} 
                   node[below,sloped] {\ONE}
                   (s);

      \draw[bend angle=45] (User) edge[->, bend right] 
                   node[above,sloped] {\texttt{email}} 
                   node[below,sloped] {\MAYBE}
                   (s);

      \draw[bend angle=20] (Emp) edge[->, bend right] 
                   node[above,sloped,pos=0.4] {\texttt{name}} 
                   node[below,sloped] {\ONE}
                   (s);

      \draw[bend angle=45] (Emp) edge[->, bend left] 
                   node[above,sloped] {\texttt{email}} 
                   node[below,sloped] {\ONE}
                   (s);
                 
    \end{scope}
\node[anchor=center, text width=12.75cm] at (-2.625,-4) {\normalsize
  \begin{align*}
    & \mathtt{Bug} \rightarrow 
    \mathtt{descr}\dbl\mathtt{Literal} ,\ 
    \mathtt{reportedBy}\dbl{}\mathtt{User} ,\ 
    \mathtt{reproducedBy}\dbl{}\mathtt{Employee}^\MAYBE ,\ 
    \mathtt{related}\dbl{}\mathtt{Bug}^\MANY\\
    & \mathtt{User} \rightarrow 
    \mathtt{name}\dbl\mathtt{Literal} ,\  
    \mathtt{email}\dbl{}\mathtt{Literal}^\MAYBE\\
    & \mathtt{Employee} \rightarrow 
    \mathtt{name}\dbl\mathtt{Literal} ,\  
    \mathtt{email}\dbl{}\mathtt{Literal}
  \end{align*}
};

  \end{tikzpicture}
  \caption{An RDF graph with bug reports (top left) together with a
    shape expression schema (bottom) and the corresponding shape graph
    (top right).
  }
  \label{fig:rdf-bug-reports}
\end{figure*}
 The schema requires a bug report to have a
description and a user who reported it. Optionally, a bug report may have an
employee who successfully reproduced the bug. Also, a bug report can have a
number of related bug reports. A user has a name and an optional email address
while an employee has a name and a mandatory email address.

In this paper, we investigate the classical problem of \emph{containment}: given
two schemas $S$ and $S'$, is the set of instances satisfying $S$ contained in
the set of instances satisfying $S'$? This problem has application to a vast
number of problems that perform non-trivial reasoning tasks such as data
exchange, query optimization, or
inference~\cite{APRRS11,CeGoMa15,FeSu98,GoFrMa15,ZhDuYuZh14}. The task at hand
is difficult for a number of reasons.

Because the neighborhood of a node in an RDF graph is unordered, the regular
expression define bag languages, also known as \emph{commutative
  languages}~\cite{HaHo16}, where the relative order among symbols is
irrelevant. This lack of order gives raise to a significant degree of
nondeterminism when working with \emph{regular bag expressions} (\RBE). For
instance, membership for \RBE i.e., deciding whether a bag of symbols belong to
the language defined by an \RBE, is \NP-complete~\cite{KoTo10}. Similarly,
validation for \ShEx i.e., deciding whether a RDF graph satisfies a \ShEx, is
\NP-complete too~\cite{GPBSSH15}. The need for nondeterminism can be limited 
 by disallowing disjunction and permitting the Kleene closure on atomic
symbols only. This yields the class \RBEZero with tractable membership and
tractable validation for the corresponding class of shape expression schemas
\ShExZero. Similarly, single-occurrence regular bag expressions (\SORBE) have
tractable membership and give rise to \emph{deterministic} shape expression
schemas (\DetShEx), where the same symbol can be used only once. Their validation is also tractable.~\cite{GPBSSH15}. Both restrictions offer enough room to
accommodate practical uses, and in particular, the schema in
Figure~\ref{fig:rdf-bug-reports} satisfies them both.

Since \ShEx is a schema language based on types, comparing two schemas requires
the ability to compare types, and consequently, testing $S\subseteq S'$ revolves
around questions whether a type $t$ of $S$ is \emph{covered} by the union of
types $s_1,\ldots,s_k$ of $S'$. Indeed, suppose that in the schema in
Figure~\ref{fig:rdf-bug-reports} we replace the type $\mathtt{User}$ with two
types depending on whether or not the user has an email address:
\begin{small}
\begin{align*}
  & \mathtt{User}_1 \rightarrow 
    \mathtt{name}\dbl\mathtt{Literal}\\[-2pt]
  & \mathtt{User}_2 \rightarrow 
    \mathtt{name}\dbl\mathtt{Literal} ,\  
    \mathtt{email}\dbl{}\mathtt{Literal}
\end{align*}
\end{small}%
and adapt the rest of the schema by replacing $\mathtt{Bug}$ with
\begin{small}
\begin{align*}
  & \mathtt{Bug}_1 \rightarrow 
    \begin{aligned}[t]
      &\mathtt{descr}\dbl\mathtt{Literal} ,\ 
      \mathtt{reportedBy}\dbl{}\mathtt{User}_1 ,\\[-2pt]
      & \smash{\mathtt{reproducedBy}\dbl{}\mathtt{Employee}^\MAYBE} ,\\[-2pt]
      & \smash{\mathtt{related}\dbl{}\mathtt{Bug}_1^\MANY,\ 
      \mathtt{related}\dbl{}\mathtt{Bug}_2^\MANY}
    \end{aligned}\\[-2pt]
  & \mathtt{Bug}_2 \rightarrow 
    \begin{aligned}[t]
      &\mathtt{descr}\dbl\mathtt{Literal} ,\ 
      \mathtt{reportedBy}\dbl{}\mathtt{User}_2 ,\\[-2pt]
      & \smash{\mathtt{reproducedBy}\dbl{}\mathtt{Employee}^\MAYBE} ,\\[-2pt]
      & \smash{\mathtt{related}\dbl{}\mathtt{Bug}_1^\MANY,\ 
      \mathtt{related}\dbl{}\mathtt{Bug}_2^\MANY}
    \end{aligned}
\end{align*}
\end{small}%
Although no longer deterministic (the symbol $\mathtt{related}$ is used twice in
the type definitions of $\mathtt{Bug}_1$ and $\mathtt{Bug}_2$), the modified
schema is equivalent to the original one as the type $\mathtt{Bug}$ is covered
by the union of the types $\mathtt{Bug}_1$ and $\mathtt{Bug}_2$, and the type
$\mathtt{User}$ by the union of $\mathtt{User}_1$ and $\mathtt{User}_2$, the
latter also being equivalent to $\mathtt{Employee}$. Naturally, the fact that a
type might be covered by a union of types is a source of complexity of the
containment problem, and it is an interesting question if there is a class of
schemas for which comparison on pairs of types alone would suffice. For this, we
use \emph{shape graphs}, which are natural graphical representation of \ShExZero
(cf.~Figure~\ref{fig:rdf-bug-reports}), and propose a graph-theoretic notion of
an \emph{embedding} between pairs of shape graphs. In essence, an embedding
identifies in a simulation-like manner when a type is covered by another type,
and therefore, is a sufficient condition for containment. We identify a class
\DetShExZeroMinus for which embedding is also a necessary condition for
containment. Because embeddings are carefully crafted to be tractable, we obtain
a class with tractable containment. The class \DetShExZeroMinus is a subclass of
\DetShExZero, deterministic shape expression schemas that use \RBEZero, with
further minor restrictions that nevertheless do not prevent it from being of
practical use, and in particular, the schema in Figure~\ref{fig:rdf-bug-reports}
belongs to \DetShExZeroMinus. The additional restrictions are necessary as we
show the containment problem for \DetShExZero to be intractable. Interestingly,
for a schema $S$ in \DetShExZeroMinus we construct a \emph{characterizing} graph
$G$ such that $G$ is satisfied by any schema $S'$ in \DetShExZeroMinus if and
only if $S\subseteq S'$.

Checking the containment $S\subseteq S'$ involves implicit negation: checking whether
there is no \emph{counter-example}, an instance that satisfies $S$ and does not
satisfy $S'$. The implicit negation allows to encode disjunction even in classes
of schemas that explicitly forbid using disjunction in the type definition, such
as \ShExZero. This renders \ShExZero very powerful and allows for pairs of
schemas for which the smallest counter-example is of exponential
size. Similarly, we observe a significant increase in complexity: testing
containment for shape graphs is \EXP-hard and in \coNEXP.


The picture of containment for arbitrary shape expression schemas is far from
obvious. It has been observed that $\exists\text{MSO}$ on graphs is alone
insufficient to capture \ShEx and needs to be enriched with Presburger
arithmetic~\cite{SBLGHPS15}. However, monadic extensions of Presburger
arithmetic quickly become undecidable~\cite{ElRa66,SeScMu03}. The question
whether containment for \ShEx is decidable at all is non-trivial and we answer
it positively. While containment for arbitrary shape expressions is
\coNEXP-hard, we show a construction of a counter-example that is triple
exponential but becomes double exponential if a simple compression is used. The
compression is innocuous enough to keep validation of compressed graphs w.r.t.\
\ShEx in \NP. Consequently, we show that the containment for \ShEx is in
$\coTwoNEXP^\NP$. The precise complexity of containment for \ShEx remains an
open question.



Our study has a number of outcomes:
\begin{itemize}
\itemsep0pt
\item a tractable notion of embeddings that is a sufficient condition for
  containment, and a necessary one of a subclass of deterministic shape
  expression schemas.
\item a set of bounds on the size of a minimal graph that satisfies one schema
  and violates another;
\item a thorough characterization of complexity of containment for classes of
  shape expression schemas;
\end{itemize}

\paragraph{Related work}
\emph{tree automata}, \emph{unordered XML schemas}, \emph{implication for graph
  logics}, \emph{containment of \RBEs}, \emph{implication for Presburger
  Arithmetic}
\cite{Ve97}
\url{http://wcms.inf.ed.ac.uk/lfcs/Decidability_Of_Monadic_Theories_Slides.pdf}

\paragraph{Organization}
The paper is organized as follows. In Section~\ref{sec:basics} we present basic
notions. 
In Section~\ref{sec:embeddings} we propose the notion of an embedding and
establish its semantic and computational properties. In
Section~\ref{sec:determinism} we identify a tractable subclass of deterministic
shape expression schemas. In Section~\ref{sec:shex-zero} we analyze containment
for \ShExZero. And in Section~\ref{sec:shex} we investigate the full fragment of
\ShEx. We summarize and outline directions of further study in
Section~\ref{sec:conclusions}. Because of space restriction many proofs are
presented as sketches with full details in appendix.


\section{Basic notions}
\label{sec:basics} 
Throughout this paper we employ elements of function notation to relations, and
conversely, often view functions as relations. For instance, for a binary
relation $R\subseteq A\times B$ we set
$\dom(R)=\{a\in A \mid \exists b\in B.\ (a,b)\in R\}$,
$\ran(B)= \{b\in B \mid \exists a\in A.\ (a,b)\in R\}$,
$R(a) = \{b\in B \mid (a,b)\in R\}$ for $a\in A$, and
$R^{-1}(b)=\{a\in A \mid (a,b)\in R\}$ for $b\in B$.


\paragraph{Intervals}
We use pairs of numbers including the infinite constant $\infty$ to represent
intervals: the pair $[n;m]$, with $n\leq m\leq\infty$, represents the set
$\{i\mid n \leq i \leq m\}$. We assume that both $n$ and $m$ are stored in
binary. We use a number of operators on intervals: every interval $I=[n;m]$ has
its lower bound $\min(I)=n$ and its upper bound $\max(I)=m$. The point-wise
addition of two intervals $A\oplus B=\{a+b\mid a\in A,\ b\in B\}$ has a natural
interpretation: $[n_1;m_1]\oplus[n_2;m_2]=[n_1+n_2;m_1+m_2]$. Note that $[0;0]$
is the neutral element of $\oplus$, and hence, for $k=0$ the expression
$I_1\oplus\ldots\oplus I_k$ evaluates to $[0;0]$. Also
$[n_1;m_1]\subseteq[n_2;m_2]$ iff $n_2\leq n_1\leq m_1\leq m_2$. 

Four \emph{basic intervals} are commonly employed in popular schema languages
for semi-structured databases, listed here together with their shorthand
notation: $\ONE$ stands for $[1;1]$, $\MAYBE$ for $[0;1]$, $\PLUS$ for
$[1;\infty]$, and $\MANY$ for $[0;\infty]$. By $\M$ we denote the set of basic
intervals. Additionally, we shall use $\NONE$ as a short for $[0;0]$.

\paragraph{Bags}
Let $\Delta$ be a finite set of symbols. A \emph{bag} over $\Delta$ is a
function $w:\Delta\rightarrow\mathbb{N}$ that maps a symbol to the number of its
occurrences. The empty bag $\varepsilon$ has $0$ occurrences of every symbol
i.e., $\varepsilon(a)=0$ for every $a\in\Delta$. We present bags using the
notation $\ml a,\ldots \mr$ with elements possibly being repeated. For example,
when $\Delta=\{a,b,c\}$, $w_0=\ml a, a, a, c, c \mr$ represents the function
$w_0(a) = 3$, $w_0(b) = 0$, and $w_0(c) = 2$.

The \emph{bag union} $w_1\uplus w_2$ of two bags $w_1$ and $w_2$ is defined as
$[w_1\uplus w_2](a) = w_1(a)+w_2(a)$ for all $a\in\Delta$.
A \emph{bag language} is a set of bags. The bag union of two languages $L_1$ and
$L_2$ is the language
$L_1\uplus L_2 = \{w_1\uplus w_2\mid w_1\in L_1, w_2\in L_2\}$. Also, for a
given bag language $L$, we define $L^0=\{\varepsilon\}$ and
$L^i=L\uplus L^{i-1}$ for $i>0$.

\paragraph{Regular bag expressions}
\emph{Regular bag expressions} (\RBE) are analogues of regular expressions for
defining bag languages and use disjunction ``$\mid$,'' unordered concatenation
``$\shuffle$,'' and unordered repetition. Formally, they are defined with the
following grammar:
\[
E ::= \epsilon \mid a \mid (E|E) \mid
(E \shuffle E) \mid E^I,
\]
where $a\in\Delta$ and $I$ is an interval. Their semantics is defined as
follows: $L(\epsilon) = \{ \varepsilon \}$, $L(a) = \{\ml a\mr\}$,
$L(E_1\mid E_2) = L(E_1) \cup L(E_2)$,
$L(E_1\shuffle E_2) = L(E_1) \uplus L(E_2)$, and
$L(E^I) = \textstyle\bigcup_{i \in I} L(E)^i$. 
By $\RBEZero$ we denote the class of regular expressions of the form
$a_1^{M_1}\shuffle\ldots\shuffle a_n^{M_n}$, where $a_i\in\Sigma$ and $M_i\in\M$
for $i\in\{1,\ldots,n\}$. We point out that occurrences of symbols need not be
distinct e.g., $a \shuffle a^\PLUS \shuffle b^\MANY$ is \RBEZero.


\paragraph{Graphs} 
We employ a general graph model that allows to capture RDF graphs as well as an
important subclass of shape expression schemas (\ShExZero). Because shape
expressions schemas do not constrain the predicates of the edges of an RDF
graph, we assume a fixed set $\Sigma$ of predicates names used to label edges of
graphs. To allow expressing shape expression schemas, we additionally label each
edge with an occurrence interval, which intuitively indicate the admissible
number of edges of the given kind (cf.~Definition~\ref{def:embedding}). Also,
the general graph model allows multiple edges connecting the same pair of nodes
with the same predicate label, which is not allowed in RDF graphs.
\begin{definition}[General graphs]
  \label{def:graph}
  A \emph{graph} is a tuple $G=(N_G,E_G,\source_G,\target_G,\lab_G,\occur_G)$,
  where $N_G$ is a finite set of \emph{nodes}, $E_G$ is a finite set of
  \emph{edges}, the topology of the graph is defined with the functions
  $\source_G:E_G\rightarrow N_G$ and $\target_G:E_G\rightarrow N_G$ that
  identify respectively the origin node and end point node of an edge,
  $\lab_G:E_G\rightarrow\Sigma$ assigns a (predicate) label to an edge, and
  $\occur_G:E_G\rightarrow\Sigma$ assigns an occurrence interval to an edge. A
  graph is \emph{simple} if it uses only the interval $\ONE$ and has no two edges
  with the same origin, the same end point, and the same label. By $\GZero$ we
  denote the set of all simple graphs. A \emph{shape graph} is a graph that uses
  only basic occurrence intervals (in \M) and we denote the class of all shape
  graphs with \ShExZero. \qed
\end{definition}
For the purposes of studying containment of shape expression schemas the class
of simple graph captures adequately RDF graphs. Indeed, while RDF nodes are
labeled with URIs, literal values, and blank identifiers, and shape expression
schemas can constraint node labels, in general these constraints can be
``simulated.'' For instance, if the schema imposes a type of admissible literal
nodes (integer, date, etc.), literal nodes can be modified to include an
outgoing edge labeled with the type name.

Shape expression schemas constrain the outbound neighborhood of a node, and for
that purpose we identify the set of all outgoing edges of a node $n\in N_G$ with
$\out_G(n)=\{e\in{}E_G\mid\source_G(e) = n\}$. Sometimes, if a node $n$ has an
outgoing edge leading to $m$, we shall say that $n$ is a \emph{parent} of $m$
and $m$ is a \emph{child} of $n$ (even if $n$ and $m$ are the same node). Also,
we call an $a$-edge any edge labeled with $a\in\Sigma$, and analogously, an
$I$-edge any edge with occurrence interval $I$.

\paragraph{Shape Expression Schemas}
Again, we assume a fixed set of predicate labels $\Sigma$. Given a set of type
names $\Gamma$, a \emph{shape expression} over $\Gamma$ is an \RBE over
$\Sigma\times \Gamma$ and in the sequel we write $(a,t)\in\Sigma\times \Gamma$
simply as $a\dbl{}t$. A \emph{shape expression schema} (\ShEx) is a pair
$S=(\Gamma_S,\delta_S)$, where $\Gamma_S$ is a finite set of types, and
$\delta_S$ is a \emph{type definition} function that maps elements of $\Gamma_S$
to shape expressions over $\Gamma_S$. Typically, we present a \ShEx $S$ as a
collection of rules of the form $t\rightarrow E$ to indicate that
$\delta_S(t)=E$, where $E$ is a shape expression (naturally, no two rules shall
have the same left-hand side).
For a class of \RBEs $\C$, by $\ShEx(\C)$ we denote the class of shape
expression schemas using only shape expressions in $\C$.


\begin{figure}[htb]
  \centering
  \begin{tikzpicture}[>=latex, scale=0.85]
    \node at (-0.15, 1.25) {$G_0$:};
    \node[] (n0) at (0,0) {$n_0$};
    \node[] (n1) at (1.5,0) {$n_1$};
    \node[] (n2) at (3,0) {$n_2$};
    \draw (n0) edge[->] node[above] {$a$} (n1);
    \draw (n1) edge[->] node[above] {$c$} (n2);
    \draw (n1) edge[loop,->] node[above] {$b$} (n1);

    \node at (3.2, 1.25) {$S_0$:};
    \node[right] at (3.5, 0.65) {
      $\begin{array}{l}
         t_0 \rightarrow a\dbl t_1\\
         t_1 \rightarrow b\dbl t_2\shuffle c\dbl t_3\\
         t_2 \rightarrow {b\dbl t_2}^\MAYBE \shuffle c\dbl t_3\\
         t_3 \rightarrow \epsilon
       \end{array}$
     };
  \end{tikzpicture}
  \caption{\label{fig:graph-and-schema} A simple graph $G_0$ and a schema
    $S_0$.}
\end{figure}

We recall the formal semantics of \ShEx~\cite{SBLGHPS15} and illustrate it on
the example of a simple graph $G_0$ and a schema $S_0$ in
Figure~\ref{fig:graph-and-schema}. A \emph{typing} of a simple graph $G$ w.r.t.\
$S$ is a relation $T\subseteq N_G\times\Gamma_S$, and we say that $n$ has, or
satisfies, type $t$ (w.r.t.\ $T$) if $(n,t)\in T$. For instance, a typing of
$G_0$ w.r.t.\ $S_0$ is (for clarity, we employ functional notation)
\begin{align*}
  &T_0(n_0) = \{ t_0 \},\ \,
   T_0(n_1) = \{ t_1, t_2\},\ \,
   T_0(n_2) = \{ t_3 \}.
\intertext{The \emph{signature} of a node $n\in N_G$ w.r.t.\ $T$ is 
an $\RBE$ expression}
&\sign_G^T(n)=\bigshuffle_{e\in\out_G(n)} 
\big(\big|_{t\in T(\target_G(e))} \lab_G(e)\dbl t\big) 
\intertext{For instance, the signature of $n_1$ in $G_0$ w.r.t.\ $T_0$ is}
&\sign_{G_0}^{T_0}(n_1) = 
(b\dbl{}t_1\mid b\dbl t_2) \shuffle c\dbl t_3.
\end{align*}
A node $n$ \emph{satisfies} a shape expression $E$ w.r.t.\ a typing $T$ iff
$L(\sign_G^T(n))\cap L(E)\neq\emptyset$. For instance, $n_1$ satisfies the type
$t_2$ of $S_0$ w.r.t.\ $T_0$. The typing $T$ is \emph{valid} if and only if
every node satisfies the type definition of every type assigned to the node
i.e., $L(\sign_G^T(n))\cap L(\delta_S(t))\neq\emptyset$ for every $(n,t)\in
T$. Typings of $G$ w.r.t.\ $S$ form a semi-lattice, with the union as the meet
operation, and consequently there exists a unique maximal typing, which we
denote by $\Typing_{G:S}$. Now, $G$ \emph{satisfies} $S$ if every node of $G$
satisfies at least one type i.e., $\dom(\Typing_{G:S})=N_G$. By $L(S)$ we denote
the set of all simple graphs that satisfy $S$.

\paragraph{Containment}
In this paper, we investigate the containment problem for \ShEx: given two shape
expression schemas $S$ and $S'$ we say that $S$ is \emph{contained} in $S'$, in
symbols $S\subseteq S'$, if $L(S)\subseteq L(S')$. A \emph{counter-example} (for
the containment of $S$ in $S'$) is any graph $G\in L(S)\minus L(S')$.


\section{Embeddings}
\label{sec:embeddings}
In this section we present an alternative definition of the semantics of a
subclass of shape expressions schemas $\ShEx(\RBEZero)$ with a natural notion of
embeddings based on graph simulation. We investigate the relationship between
embeddings and containment, and finally, we study the computational complexity
of embeddings.
\begin{definition}[Embedding]
  \label{def:embedding}
  Given two graphs $G$ and $H$, a binary relation $R\subseteq N_G\times N_H$ is
  a \emph{simulation} of $G$ in $H$ iff for any $(n,m)\in R$ there exists a
  \emph{witness} of simulation of $n$ by $m$ w.r.t.\ $R$ i.e., a function
  $\lambda:\out_G(n)\rightarrow\out_H(m)$ such that for every $e\in\out_G(n)$
  \begin{enumerate}
  \item[1.] $\lab_G(e) = \lab_H(\lambda(e))$,
  \item[2.] $(\target_G(e),\target_H(\lambda(e)))\in R$,
  \end{enumerate}
  and for every $f\in\out_H(m)$ 
  \begin{enumerate}
  \item[3.] $\bigoplus\{\occur_G(e)\mid e\in E_G,\ \lambda(e)=f\}\subseteq\occur_H(f)$. 
  \end{enumerate}
  An \emph{embedding} of $G$ in $H$ is a simulation $R$ of $G$ in $H$ that
  $\dom(R)=N_G$, and we write $G\preccurlyeq H$ if $G$ can be embedded in
  $H$. \qed
\end{definition}
Figure~\ref{fig:embedding} presents an example of an embedding of the graph
$G_0$ and the shape graph $H_0$ that corresponds to the schema $S_0$ in
Figure~\ref{fig:graph-and-schema}.
\begin{figure}[htb]
  \centering
  \begin{tikzpicture}[>=latex, scale=0.85,
    punkt/.style={circle,minimum size=0.1cm,draw,fill,inner sep=0pt, outer sep=0.125cm}
    ]
    \node at (-0.15, 1.25) {$G_0$:};
    \node[] (n0) at (0,0) {$n_0$};
    \node[] (n1) at (1.5,0) {$n_1$};
    \node[] (n2) at (3,0) {$n_2$};
    \draw (n0) edge[->] node[above] {$a$} (n1);
    \draw (n1) edge[->] node[above] {$c$} (n2);
    \draw (n1) edge[loop,->] node[above] {$b$} (n1);

    \node at (4.25, 1.25) {$H_0$:};
    \begin{scope}[xshift=4.5cm]
    \node[] (m0) at (0,0) {$t_0$};
    \node[] (m1) at (1.5,0) {$t_1$};
    \node[] (m2) at (3,0) {$t_2$};
    \node[] (m3) at (4.5,0) {$t_3$};
    \path (m2) node[above=0.5cm] (t2p) {};
    \path (t2p) node[above=.05ex] {$b\MAYBE$};
    \path (t2p) node[above=.5cm] (t3p) {};
    \path (t3p) node[above=.05ex] {$c$};

    \draw (m0) edge[->] node[above] {$a$} (m1);
    \draw (m1) edge[->] node[above] {$b$} (m2);
    \draw (m2) edge[-, out=40, in=0, shorten <=-3pt, shorten >=-3.6pt] (t2p);
    \draw (t2p) edge[->, out=180, in=140, shorten <=-3.6pt, shorten >=-4pt] (m2);

    \draw (m1) edge[-, out=80, in=180,  shorten >=-3.6pt] (t3p);
    \draw (t3p) edge[->, out=0, in=120, shorten <=-3.6pt] (m3);
    \draw (m2) edge[->] node[below] {$c$} (m3);
    \end{scope}

    \draw[blue!70!black, densely dashed] (n0) edge[->, bend right] (m0); 
    \draw[blue!70!black, densely dashed] (n1) edge[->, bend right] (m1); 
    \draw[blue!70!black, densely dashed] (n1) edge[->, bend right] (m2); 
    \draw[blue!70!black, densely dashed] (n2) edge[->, bend right] (m3); 

  \end{tikzpicture}
  \caption{\label{fig:embedding} An embedding of $G_0$ in $H_0$.}
  \label{fig:graphs}
\end{figure}

The set of simulations of $G$ in $H$ is a semi-lattice (with the meet operation
interpreted with the set union), and consequently, there exists exactly one
\emph{maximal simulation} of $G$ in $H$. We use embeddings to treat graphs as
schemas. The \emph{language} of a graph $H$ is the set of all simple graphs that
can be embedded in $H$ i.e., $L(H) = \{G\in \GZero \mid G\preccurlyeq H\}$. Two
graphs $G$ and $H$ are \emph{equivalent}, in symbols $G\equiv H$, if
$L(G)=L(H)$. We write $G\subseteq H$ if there is the corresponding language
inclusion $L(G)\subseteq L(H)$. 

There is a natural correspondence between shape expression schemas using
\RBEZero only and shape graphs. The existence of a witness of a simulation is
equivalent to type satisfaction, which shows the following.
\begin{proposition}
\label{prop:shex-zero}
\ShExZero captures precisely $\ShEx(\RBEZero)$.
\end{proposition}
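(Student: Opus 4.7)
The plan is to exhibit a natural bijection between shape graphs (elements of $\ShExZero$) and schemas in $\ShEx(\RBEZero)$, and then to verify that corresponding objects define the same language of simple graphs. In one direction, given a shape graph $H$, I build the schema $S_H=(N_H,\delta)$ where, for each $t\in N_H$,
\[
\delta(t) \;=\; \bigshuffle_{e\in\out_H(t)} \bigl(\lab_H(e)\dbl\target_H(e)\bigr)^{\occur_H(e)}.
\]
Since each factor is an atomic symbol from $\Sigma\times N_H$ raised to a basic interval in $\M$, the expression $\delta(t)$ lies in $\RBEZero$. Conversely, given $S=(\Gamma_S,\delta_S)\in\ShEx(\RBEZero)$, each $\delta_S(t)$ is a shuffle of factors $(a\dbl t')^M$ with $M\in\M$, so I construct $H_S$ on node set $\Gamma_S$ with one outgoing edge from $t$ labelled $a$, targeted at $t'$, and with occurrence $M$, per factor. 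These two translations are mutual inverses (shuffle being commutative).

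The core step is to show that for every simple graph $G$ and shape graph $H$, a relation $T\subseteq N_G\times N_H$ is a valid typing of $G$ w.r.t.\ $S_H$ iff $T$ is a simulation of $G$ in $H$. Fix $(n,t)\in T$. A witness of $L(\sign_G^T(n))\cap L(\delta(t))\neq\emptyset$ consists of, for each edge $e\in\out_G(n)$, a choice $t_e\in T(\target_G(e))$, and for each edge $f\in\out_H(t)$, a count $c_f\in\occur_H(f)$, such that the two resulting bags over $\Sigma\times N_H$ coincide, i.e.\ for every pair $(a,t')$,
\[
\bigl|\{e\in\out_G(n)\mid \lab_G(e)=a,\; t_e=t'\}\bigr| \;=\; \sum_{\substack{f\in\out_H(t)\\ \lab_H(f)=a,\;\target_H(f)=t'}} c_f.
\]
Such data is equivalent to the existence of a function $\lambda:\out_G(n)\to\out_H(t)$ satisfying $\lab_G(e)=\lab_H(\lambda(e))$ and $\target_H(\lambda(e))=t_e$, together with $|\lambda^{-1}(f)|=c_f\in\occur_H(f)$ for every $f$. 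These are exactly clauses~(1) and~(2) of Definition~\ref{def:embedding} plus a point-counting version of clause~(3). Since $G$ is simple, $\occur_G(e)=\ONE$ for every $e$, hence
\[
\bigoplus\{\occur_G(e)\mid e\in\lambda^{-1}(f)\} \;=\; [\,|\lambda^{-1}(f)|;|\lambda^{-1}(f)|\,],
\]
and its inclusion in $\occur_H(f)$ is precisely $|\lambda^{-1}(f)|\in\occur_H(f)$. So the RBEZero witness and the simulation witness carry the same information.

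Consequently, valid typings and simulations from $G$ into $H$ coincide, and since both $G\in L(S_H)$ and $G\preccurlyeq H$ are defined as the existence of such a relation with domain $N_G$, we obtain $L(S_H)=L(H)$, and symmetrically $L(S)=L(H_S)$. This establishes the bijective correspondence between $\ShExZero$ and $\ShEx(\RBEZero)$ at the semantic level. The only delicate point is the matching between the bag-equality constraint coming from $\RBEZero$ and the function $\lambda$ in the simulation definition; once one notes that simpleness of $G$ reduces the $\oplus$-sum of $\ONE$ intervals to ordinary integer counting, the equivalence is routine bookkeeping.
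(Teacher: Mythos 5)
Your proof is correct and follows the same route the paper indicates: it sets up the natural syntactic correspondence between shape graphs and $\ShEx(\RBEZero)$ schemas and then shows that a witness of simulation (the function $\lambda$, with simpleness of $G$ collapsing the $\oplus$-sum of $\ONE$ intervals to the count $|\lambda^{-1}(f)|\in\occur_H(f)$) carries exactly the same data as a witness of type satisfaction for the corresponding $\RBEZero$ expression. The paper states this only as a one-line observation, so your write-up is just a more detailed rendering of the intended argument.
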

Using arbitrary intervals in graphs does not make them more expressive than
shape graphs which use basic intervals only. Indeed, in the context of shape
expressions any interval can be expressed with basic intervals e.g.,
$a\dbl{}t^{[2;3]}$ is equivalent to
$a\dbl{}t^\ONE\shuffle a\dbl{}t^\ONE \shuffle a\dbl{}t^\MAYBE$. As we show later
on, the restriction to basic intervals has beneficial computational
consequences, which merely reflects the fact that embedding is a sufficient but
not a necessary condition for containment.


\subsection{Relationship to containment}
It is relatively easy to see that embeddings can be composed and consequently we
get.
\begin{lemma}
  \label{lemma:embedding-implies-inclusion}
  For any two graphs $G$ and $H$, $G\preccurlyeq H$ implies
  $G\subseteq H$.
\end{lemma}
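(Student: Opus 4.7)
The plan is to establish that embeddings compose: if $G' \preccurlyeq G$ and $G \preccurlyeq H$, then $G' \preccurlyeq H$. From this, the lemma follows immediately: given any $G' \in L(G)$, we have $G' \in \GZero$ and $G' \preccurlyeq G$, so combined with the hypothesis $G \preccurlyeq H$ we conclude $G' \preccurlyeq H$, i.e.\ $G' \in L(H)$.

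To prove composition, let $R_1 \subseteq N_{G'} \times N_G$ be an embedding of $G'$ in $G$ and $R_2 \subseteq N_G \times N_H$ an embedding of $G$ in $H$. Define $R = R_2 \circ R_1 \subseteq N_{G'} \times N_H$ in the standard relational sense. Since $\dom(R_1) = N_{G'}$ and $\dom(R_2) = N_G$, it follows that $\dom(R) = N_{G'}$. Now, for each pair $(n,m) \in R$, pick some $k \in N_G$ with $(n,k) \in R_1$ and $(k,m) \in R_2$, take the associated witnesses $\lambda_1 : \out_{G'}(n) \to \out_G(k)$ and $\lambda_2 : \out_G(k) \to \out_H(m)$, and set $\lambda = \lambda_2 \circ \lambda_1$. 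Conditions~1 (label preservation) and~2 (target consistency) chain through the two witnesses without difficulty.

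The main obstacle is condition~3 on occurrence intervals. For a fixed $f \in \out_H(m)$, one rewrites
\begin{align*}
  \bigoplus \{\occur_{G'}(e) \mid \lambda(e)=f\}
  &= \bigoplus_{g \in \lambda_2^{-1}(f)} \bigoplus \{\occur_{G'}(e) \mid \lambda_1(e) = g\},
\end{align*}
by grouping edges $e \in \out_{G'}(n)$ according to their intermediate image $\lambda_1(e)$. Applying condition~3 for $\lambda_1$ to each inner sum gives $\bigoplus\{\occur_{G'}(e) \mid \lambda_1(e)=g\} \subseteq \occur_G(g)$, and then condition~3 for $\lambda_2$ yields $\bigoplus_{g \in \lambda_2^{-1}(f)} \occur_G(g) \subseteq \occur_H(f)$. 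Chaining these requires monotonicity of $\oplus$ with respect to interval inclusion, which is immediate from $[n_1;m_1] \oplus [n_2;m_2] = [n_1+n_2;m_1+m_2]$: if $I_j \subseteq J_j$ for $j=1,2$ then $I_1 \oplus I_2 \subseteq J_1 \oplus J_2$, and this extends to arbitrary finite sums.

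Combining the three conditions shows that $R$ is an embedding of $G'$ in $H$, completing the proof of composition and hence of the lemma.
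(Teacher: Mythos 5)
Your proposal is correct and follows exactly the route the paper takes: the paper derives the lemma from the observation that embeddings compose, and your detailed verification of the composed witness $\lambda_2\circ\lambda_1$ (in particular the grouping of edges by their intermediate image and the monotonicity of $\oplus$ under interval inclusion for condition~3) is precisely the content the paper leaves as ``relatively easy to see.''
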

The converse does not hold as illustrated in
Figure~\ref{fig:embedding-does-not-imply-inclusion}, where two equivalent graphs
are given but embedding holds only in one direction.
\begin{figure}[htb]
\centering
\begin{tikzpicture}[
  >=latex,
  scale=1,
  punkt/.style={circle,minimum size=0.1cm,draw,fill,inner sep=0pt, outer sep=0.125cm}]
  \begin{scope}
    \node at (-0.75,.125) {$G:$};
    \node[punkt] (t0) at (0,0) {}; 
    \node[punkt] (t1) at (0,-1) {} 
         edge[<-]
         node[left] {$a$}
         node[right] {\MANY} (t0);
    \node[punkt] (t2) at (0,-2) {} 
         edge[<-]
         node[left] {$b$}
         node[right] {\MANY}
         (t1);
  \end{scope}
  \begin{scope}[xshift=4cm]
    \node at (-2,.125) {$H:$};
    \node[punkt] (t0) at (-0.5,0) {}; 
    \begin{scope}[xshift=-1.5cm]
    \node[punkt] (t1) at (0,-1) {} 
         edge[<-]
         node[above, sloped] {$a\MANY$} (t0);
    \end{scope}

    \begin{scope}[xshift=-0.5cm]
    \node[punkt] (t1) at (0,-1) {} 
         edge[<-]
         node[left] {$a$}
         node[right] {\MANY} (t0);
    \node[punkt] (t2) at (0,-2) {} 
         edge[<-]
         node[left] {$b$} (t1);
    \end{scope}

    \begin{scope}[xshift=0.5cm]
    \node[punkt] (t1) at (0,-1) {} 
         edge[<-]
         node[above, sloped] {$a\MANY$} (t0);
    \node[punkt] (t2) at (-0.5,-2) {} 
         edge[<-]
         node[above, sloped] {$b$} (t1);
    \node[punkt] (t2') at (0.5,-2) {} 
         edge[<-]
         node[above, sloped] {$b\MANY$}
         (t1);
    \end{scope}
\end{scope}
\end{tikzpicture}
\caption{Inclusion does not imply an embedding: 
  $G\subseteq H$ but $G \not\preccurlyeq H$.}
  \label{fig:embedding-does-not-imply-inclusion}
\end{figure}
This example basically illustrates that a shape expression $b\dbl{}t^\MANY$ is
equivalent to $\epsilon \mid b\dbl{}t \mid b\dbl{}t^\PLUS$, a (disjoint) union
that enumerates cases of the original expression. It is possible to identify
simple classes of shape graphs, such as schemas that use the occurrence interval
$\MANY$ only, for which embedding is also a necessary condition for containment
but the practical applications of such a class seem to be limited. Later on
(Section~\ref{sec:determinism}) we identify a more practical subclass of shape
graphs for which containment can be decided with the help of embeddings. First
we investigate the computational cost of constructing embeddings.

\subsection{Complexity}
While general graphs with arbitrary intervals do not surpass the expressive
power of graphs using basic intervals only, the computational implications of
using arbitrary intervals are significant. Testing embeddings for shape graphs,
which use only basic occurrence intervals is tractable and it becomes
intractable if arbitrary intervals may be used. This rise in computational
complexity does not come from binary encoding of intervals, in fact the results
remain negative even if the arbitrary intervals are encoded in unary.


The basic algorithm for constructing an embedding of $G$ in $H$ follows a
natural fix-point refinement scheme:
\begin{enumerate}
\itemsep 0pt
\item Begins with $R_0=N_G\times N_H$, 
\item Iteratively refine it $R_i=\Refine(R_{i-1})$ by removing any pair of
  nodes with no simulation  witness
  \[
    \begin{split}
      &\Refine(R)=\{ (n,m) \in R  \mid\text{there exists a} \\
      &\hspace{2.75em}\text{witness $\lambda$
        of simulation of $n$ by $m$ w.r.t.\ $R$}\},
    \end{split}
  \] 
\item Terminates at the earliest iteration $k$ when a fix-point is reached
  $\Refine^*(R_0)=R_k=\Refine(R_k)$.
\end{enumerate}
It is easy to show that the fix-point is the maximal simulation of $G$ in $H$
and naturally, it is an embedding if its domain contains all nodes of $G$. This
process is monotone, removing a number of pairs at each step, and therefore, it
terminates after a polynomial number of steps. The core difficulty, and a
potential source of intractability, is in testing the existence of a witness of
simulation.

\subsubsection{Tractability for basic intervals}
Finding witnesses of simulation is tractable when basic intervals are used only.
\begin{theorem}
  \label{thm:embeddings-tractable}
  Testing the existence of embeddings between shape graphs is in \PTIME.
\end{theorem}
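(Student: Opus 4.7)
The plan is to flesh out the fixpoint-refinement scheme sketched just before the theorem. The outer loop runs at most $|N_G|\cdot|N_H|$ times, since $R_{i-1}\supsetneq R_i$ holds whenever refinement removes any pair and $R_0=N_G\times N_H$; within a round at most $|N_G|\cdot|N_H|$ witness-existence tests are performed, one per surviving pair. The whole algorithm is therefore polynomial provided the individual witness test is, so the proof reduces to establishing the latter.

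Fix a pair $(n,m)\in R$. Because a witness $\lambda\colon\out_G(n)\to\out_H(m)$ must preserve labels (condition~1), the search decomposes across $\Sigma$: for each label $a$ we independently assign the edges $E_a^G=\{e\in\out_G(n)\mid\lab_G(e)=a\}$ to edges $E_a^H=\{f\in\out_H(m)\mid\lab_H(f)=a\}$, requiring every pair $(e,\lambda(e))$ to satisfy $(\target_G(e),\target_H(\lambda(e)))\in R$ (condition~2) and the per-target sum condition~3 to hold at every $f\in E_a^H$.

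Since both graphs are shape graphs, $\occur_G(e)$ and $\occur_H(f)$ range over only the four elements of $\M$, so condition~3 collapses to a small case analysis on $\occur_H(f)$: the value $\MANY$ imposes no restriction; $\PLUS$ allows any number of incoming edges but requires at least one with $\occur_G\in\{\ONE,\PLUS\}$; $\MAYBE$ permits at most one incoming edge, whose occurrence must lie in $\{\ONE,\MAYBE\}$; and $\ONE$ forces exactly one incoming edge, whose occurrence must be $\ONE$. Together with the $R$-compatibility arcs, these constraints form a bipartite network-flow instance of polynomial size in which a witness $\lambda$ corresponds to an integer flow that saturates every source edge in $E_a^G$, and whose feasibility can be checked in polynomial time by standard techniques.

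The main obstacle is the disjunctive lower bound produced by $\occur_H(f)=\PLUS$, which demands that at least one of several incoming edges be of a restricted ``heavy'' type rather than merely constraining a count. I handle it by splitting each such target $f$ into two parallel sinks: a mandatory slot of capacity exactly one, whose incoming arcs are drawn only from heavy compatible edges, plus an unrestricted overflow slot reachable from every compatible edge. Analogous splitting for the other types removes the type restrictions from the arcs entirely and leaves a plain bipartite integer flow with per-vertex lower and upper bounds, which is solvable in polynomial time.
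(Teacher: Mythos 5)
Your proof is correct, and the witness-existence subroutine is handled by a genuinely different (if related) route than the paper's. The outer fixpoint loop and its polynomial bound are the same as in the paper, and your case analysis of condition~3 over the four basic intervals is exactly right --- including the point, worth making explicit, that an empty preimage contributes $[0;0]$ and therefore already violates a $\PLUS$- or $\ONE$-sink, so your ``at least one heavy edge'' and ``exactly one $\ONE$-edge'' readings subsume non-emptiness. Where you diverge is in deciding witness existence: the paper also casts it as a flow-routing problem, but then solves it with a bespoke incremental algorithm that repairs overflows and deficits by finding acyclic rerouting paths in explicitly constructed push-forth and pull-back graphs, leaning on the structural fact that with basic intervals every overflow is singular. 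You instead delete the arcs that no valid witness can use (e.g.\ a source with upper bound $\infty$ into a $\MAYBE$- or $\ONE$-sink), split each $\PLUS$-sink into a mandatory $[1;1]$ slot fed only by $\ONE$- and $\PLUS$-sources plus an unconstrained overflow slot, and thereby reduce to feasibility of a bipartite flow with unit supplies and vertex lower/upper bounds, which is polynomial by standard max-flow-with-lower-bounds arguments (or total unimodularity). The splitting gadget is the one nontrivial step, since the $\PLUS$ constraint is disjunctive rather than a pure cardinality bound, and you handle it correctly. Your route is shorter and its correctness rests on textbook results; the paper's is self-contained and in effect hand-rolls the augmenting-path machinery that your reduction delegates to the flow algorithm.
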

\begin{proof}
We fix two graphs $G$ and $H$, a relation $R\subseteq N_G\times N_H$, and a pair
of nodes $(n,m)\in R$. We abstract the problem of existence of a witness of
simulation of $n$ by $m$ w.r.t.\ $R$ as a \emph{flow routing} problem, where we
are given a set of sources $V=\out_G(n)$, a set of sinks $U=\out_H(m)$, and a
source-to-sink connection table
$E=\{(v,u)\in V\times U \mid \lab_G(v)=\lab_H(u) \land
(\target_G(v),\target_H(u))\in R\}$. Additionally, every source $v\in V$ outputs
a volume of water
between $v\dotmin = \min(\occur_G(v))$ and $v\dotmax = \max(\occur_G(v))$, and
every sink $u\in U$ requires an input of
at least $u\dotmin = \min(\occur_H(u))$ and not more than
$u\dotmax = \max(\occur_H(u))$. The flow routing problem is to find a valid
\emph{routing} $\lambda:V\rightarrow U$ i.e., a routing such that
$(v,f(v))\in E$ for every source $v\in V$ and there are no deficits or overflows
at any source. More formally, given a routing $\lambda$ we estimate the inflow
at a sink $u$ with
\begin{align*}
&\mininflow_\lambda(u)=\textstyle\sum_{\lambda(v)=u} v\dotmin,\\
&\maxinflow_\lambda(u)=\textstyle\sum_{\lambda(v)=u} v\dotmax. 
\end{align*}
A sink $u$ is in \emph{deficit} if $\mininflow_\lambda(u)< u\dotmin$ and $u$ is
in \emph{overflow} if $\maxinflow_\lambda(u) > u\dotmax$. Observe that the
conditions 1 and 2 in Definition~\ref{def:embedding} are ensured by the
definition of $E$ while the condition 3 follows from lack of deficits and
overflows. Also, w.l.o.g.\ we can assume that in $E$ every source is paired with
at least one sink.

The algorithm for constructing a valid routing works as follows: 
\begin{enumerate}
\itemsep 0pt
\item it starts with an empty routing,
\item it assigns a sink to every source while distributing any overflow
  by pushing it forth to other sinks,
\item it solves any deficit at a sink by pulling back the input from sources
  assigned to other sinks.
\end{enumerate}
The main reason why this approach is successful if the use of basic occurrence
intervals in shape graphs, which implies that the lower bounds are only $0$ and
$1$ while the upper bounds are $1$ and $\infty$. When constructing the routing
$\lambda$ we need to pay attention to \emph{saturated} sinks that are unable to
accept any additional inflow. However, saturated sinks are exactly those $u$'s
with $u\dotmax=1$ and $\maxinflow_\lambda(u)=1$. Furthermore, w.l.o.g.\ we can
assume that $v\dotmax \leq u\dotmax$ for $(v,u)\in E$, and in particular a
source with $\infty$ upper bound can only be routed to a sink with upper bound
$\infty$. Consequently, any overflow created by the algorithm at a sink $u$ is
\emph{singular} i.e., $\maxinflow_\lambda(u)=2$ and $u\dotmax=1$

Given a (partial) routing $\lambda$ and a source $v$ with no assigned sink, the
algorithm assigns to $v$ any admissible sink $u_0$ i.e., such that
$(v,u_0)\in E$. If an overflow is created at $u_0$, the algorithm attempts to
find an acyclic path $\pi$ from $u_0$ to $\fin$ in the \emph{push-forth graph}
$G_\lambda^\rightarrow=(N,A)$, where the nodes are $N=V\cup U \cup \{\fin \}$
and the oriented edges $A$ are (for $v\in V$ and $u\in U$):
\begin{itemize}
\item $u\rightarrow v$ if $\lambda(v) = u$ and $u$ is saturated; an additional
  inflow of $1$ at sink $u$ must be redirected further and this can be done by
  redirecting the output of $v$ to another sink.
\item $v\rightarrow u$ if $(v,u)\in E$ but $\lambda(v)\neq u$; the source $v$
  can be routed to $u$ and any additional inflow at $u$ is at most $1$.
\item $u\rightarrow \fin$ if $u$ is not saturated; the sink can accept an
  additional inflow of $1$.
\end{itemize}
Rerouting $\lambda$ in accordance with a path from $u$ to $\fin$ gives us a
overflow-free routing.

When a total overflow-free routing $\lambda$ is constructed, the algorithm
identifies any sink $u_0$ with a deficit and tries to solve it by finding an
acyclic path $\pi$ from $u_0$ to $\fin$ in the \emph{pull-back graph}
$G_\lambda^\leftarrow=(N,A)$, where the nodes are $N=V\cup U\cup\{\fin\}$, and
oriented edges $A$ are (for $v\in V$ and $u\in U$):
\begin{itemize}
\item $u\rightarrow v$ if $\lambda(v)\neq u$ and $v\dotmin=1$; rerouting $v$ to
  $u$ will solve a deficit of $1$ at $u$ and may create a overflow at $u$ but
  only if $u=u_0$ and then the overflow is singular.
\item $v\rightarrow u$ if $\lambda(v)=u$, $u\dotmin=1$, and $v$ is the only
  source such that $\lambda(v)=u$ and $v\dotmin=1$; rerouting $v$ away from $u$
  will create a deficit of $1$ at $u$.
\item $v\rightarrow \fin$ if $\lambda(v)\dotmin\neq 1$, $v\dotmin=1$, and there
  is $v'\neq v$ such that $\lambda(v')=\lambda(v)$ and $v'\dotmin=1$; rerouting
  the source $v$ from the sink $\lambda(v)$ will not create a deficit at
  $\lambda(v)$.
\end{itemize}
Rerouting $\lambda$ in accordance with $\pi$ renders $\lambda$ deficit-free at
$u_0$. If the rerouting creates a singular overflow at $u_0$, the algorithm uses
the push-forth graph $G_\lambda^\rightarrow$ to find an acyclic path $\pi'$ from
$u_0$ to $\fin$ that is deficit-free i.e., with no edge $u\rightarrow v$ such
that $u\dotmin=v\dotmin=1$, which guarantees that further rerouting $\lambda$ in
accordance with $\pi'$ yields a overflow-free routing with one sink node $u_0$
less in deficit.

Naturally, the algorithm is polynomial because the sizes of the push-forth and
pull-back graphs are bounded by the size of $E$, and all constructed paths are
acyclic. \qed
\end{proof}


\subsubsection{Intractability for arbitrary intervals}
The flow routing problem becomes intractable if arbitrary intervals are allowed.

\begin{theorem}
  \label{thm:embeddings-intractable}
  Testing the existence of embeddings between graphs with arbitrary intervals is
  \NP-complete.
\end{theorem}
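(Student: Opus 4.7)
The plan for the upper bound is to guess a binary relation $R \subseteq N_G \times N_H$ together with, for every pair $(n,m) \in R$, a function $\lambda_{n,m} : \out_G(n) \to \out_H(m)$ claiming to be a witness of simulation of $n$ by $m$. The total size of these objects is polynomial in $|G|+|H|$. Verification is purely local: check that $\dom(R) = N_G$, and for every $(n,m) \in R$ verify conditions 1--3 of Definition~\ref{def:embedding} for $\lambda_{n,m}$. Condition 3 reduces to summing endpoints of integer intervals and testing the resulting interval inclusion in $\occur_H(f)$, which is polynomial even under binary encoding.

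\textbf{$\NP$-hardness via 3-PARTITION.} For hardness I would reduce from 3-PARTITION: given positive integers $a_1, \ldots, a_{3m}$ with $\sum_i a_i = mB$ and $B/4 < a_i < B/2$, decide whether $\{a_1, \ldots, a_{3m}\}$ can be split into $m$ triples each summing to $B$. The reduction produces two-node ``star'' instances. The graph $G$ has nodes $n_0, n_1$ and $3m$ parallel edges from $n_0$ to $n_1$, each labeled $a$, the $i$-th edge carrying occurrence interval $[a_i; a_i]$. The graph $H$ has nodes $m_0, m_1$ and $m$ parallel $a$-edges $f_1, \ldots, f_m$ from $m_0$ to $m_1$, each with occurrence interval $[B; B]$. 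A routine case analysis shows that any embedding $R$ must contain exactly $(n_0, m_0)$ and $(n_1, m_1)$: the pair $(n_0, m_1)$ admits no witness since $\out_H(m_1) = \emptyset$ cannot absorb $3m$ sources, and $(n_1, m_0)$ fails condition 3 because the empty sum $[0;0]$ is not contained in any $[B;B]$.

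\textbf{Correctness and the main obstacle.} A witness for $(n_0, m_0)$ is precisely a function $\lambda$ from the $3m$ edges of $G$ to the $m$ edges of $H$; conditions 1 and 2 hold automatically, and condition 3 reads $\sum_{i:\lambda(e_i)=f_j} a_i = B$ for every sink $f_j$. The constraint $B/4 < a_i < B/2$ then forces each sink to collect exactly three sources, so embeddings correspond bijectively to valid 3-partitions. The main obstacle is not the reduction itself but the choice of hardness source: a reduction from SUBSET-SUM would use exponentially large numbers under unary encoding and give only binary-input hardness. Since 3-PARTITION is strongly $\NP$-hard, all produced intervals are of size polynomial in the input even when encoded in unary, which matches and substantiates the paper's earlier claim that intractability does not arise from binary encoding.
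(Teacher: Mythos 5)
Your membership argument matches the paper's (guess the relation and all witnesses, verify locally in polynomial time). Your hardness proof, however, takes a genuinely different and noticeably more elementary route, and it is correct. The paper reduces from SAT: it builds a variable/occurrence gadget in the source graph and variable-valuation and clause gadgets in the target, using a mix of $[k;k]$, $\PLUS$, and $\MAYBE$ intervals so that the $[k;k]$ saturation forces each $\spec{w}_i$ to commit to one polarity and the $\PLUS$ intervals force every clause to absorb at least one literal. You instead observe that condition~3 of Definition~\ref{def:embedding}, applied to a single source--sink bipartite witness with singleton intervals $[a_i;a_i]$ and $[B;B]$, is literally an exact bin-packing constraint, so 3-PARTITION embeds directly into a two-node instance. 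Your side analysis is right: $(n_0,m_1)$ has no witness because no function into an empty codomain exists, and $(n_1,m_0)$ has none because the empty $\oplus$-sum is $[0;0]\not\subseteq[B;B]$, so the only possible embedding is $\{(n_0,m_0),(n_1,m_1)\}$ and its existence is equivalent to a valid 3-partition. What each approach buys: yours is shorter, uses only singleton intervals, and, via strong \NP-hardness of 3-PARTITION, immediately substantiates the paper's remark that hardness persists under unary encoding (the paper's SAT reduction also has this property, since its intervals are bounded by the occurrence count $k$, but it is less explicit about it); the paper's reduction shows that hardness already arises from the interaction of counting intervals with the combinatorial choice structure of a simulation over many nodes, which is closer in spirit to the flow-routing analysis used in the tractability proof of Theorem~\ref{thm:embeddings-tractable}. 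Either argument establishes the theorem.
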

\begin{proof}
The problem is in NP because it suffices to
guess the embedding and all witnesses and verify that they are valid (which can
be done in polynomial time). We prove hardness with a reduction from SAT. We
take a CNF formula $\varphi=c_1\land\ldots\land c_m$ over variables
$x_1,\ldots,x_n$, and w.l.o.g.\ we can assume that each variable occurs exactly
$k$ times and there is at least one positive and one negative occurrence of each
variable. We construct two graphs with arbitrary intervals but for clarity of
presentation we employ the \ShEx syntax. The first graph $H$ is defined with the
rules (with $i\in\{1,\ldots,n\}$ and $j\in\{1,\ldots,k\}$)

\begin{align*}
  & \begin{aligned}
    \spec{r}_1 \rightarrow {} 
    & \spec{a}\dbl{\spec{w}_1^{[k; k]}} \shuffle 
    \spec{a}\dbl{\spec{x}_{1,1}} \shuffle \ldots \shuffle \spec{a}\dbl {\spec{x}_{1,k}} \shuffle \\
    & \spec{a}\dbl{\neg \spec{x}_{1,1}} \shuffle \ldots \shuffle \spec{a}\dbl {\neg \spec{x}_{1,k}}\shuffle\\
    & \ldots\\
    & \spec{a}\dbl{\spec{w}_n^{[k; k]}} \shuffle 
    \spec{a}\dbl{\spec{x}_{n,1}} \shuffle \ldots \shuffle \spec{a}\dbl {\spec{x}_{n,k}} \shuffle\\
    & \spec{a}\dbl{\neg \spec{x}_{n,1}} \shuffle \ldots \shuffle \spec{a}\dbl {\neg \spec{x}_{n,k}}
  \end{aligned}\\
  &\begin{aligned}
  & {\spec{x}_{i,j}} \rightarrow {\spec{x}_{i,j}}\dbl\spec{o}&\qquad&
    {\neg \spec{x}_{i,j}} \rightarrow {\neg \spec{x}_{i,j}}\dbl\spec{o}\\
  &\spec{w}_i \rightarrow \spec{v}_i\dbl\spec{o} &&
 \spec{o} \rightarrow \epsilon
  \end{aligned}
\intertext{The second graph $K$ is defined with the rules (with $i\in\{1,\ldots,n\}$)}
  &\begin{aligned}
    \spec{r}_2 \rightarrow {}&
    \spec{a}\dbl{\spec{x}_1}^{[k, k]} \shuffle \spec{a}\dbl{\neg \spec{x}_1}^{[k, k]} \shuffle \\
    & \ldots \\
    & \spec{a}\dbl{\spec{x}_n}^{[k, k]}\shuffle \spec{a}\dbl{\neg \spec{x}_n}^{[k, k]} \shuffle \\
    & \spec{a}\dbl {\spec{c}_1}^\PLUS \shuffle  \ldots \shuffle \spec{a}\dbl {\spec{c}_s}^\PLUS
  \end{aligned}\\
  & \spec{x}_i \rightarrow \spec{v}_i\dbl\spec{o}^\MAYBE \shuffle
    {\spec{x}_{i,1}}\dbl\spec{o}^\MAYBE \shuffle \ldots \shuffle {\spec{x}_{i,k}}\dbl\spec{o}^\MAYBE \\
  &{\neg \spec{x}_i} \rightarrow \spec{v}_i\dbl\spec{o}^\MAYBE \shuffle {\neg
    \spec{x}_{i,1}}\dbl\spec{o}^\MAYBE \shuffle \ldots \shuffle
    {\neg \spec{x}_{i,k}}\dbl\spec{o}^\MAYBE\\
  &\spec{o} \rightarrow \epsilon\\
  \intertext{and additionally for $p\in\{1,\ldots,m\}$ the clause
  $c_p=\ell_1\lor\ldots\lor\ell_s$ yields the rule} 
  & \spec{c}_p \rightarrow
    \spec{l}_1\dbl\spec{o}^\MAYBE
    \shuffle
    \spec{l}_2\dbl\spec{o}^\MAYBE
    \shuffle \ldots \shuffle
    \spec{l}_s\dbl\spec{o}^\MAYBE
\end{align*}   
where $\spec{l}_q$ is $\spec{x}_{i,j}$ if $\ell_q=x_i$ or $\lnot \spec{x}_{i,j}$
if $\ell_q=\lnot x_i$, and here, $\ell_q$ is the $j$-th occurrence of $x_i$ in
$\varphi$.

Clearly, for $i\in\{1,\ldots,n\}$ and $j\in\{1,\ldots,k\}$ the node
$\spec{x}_{i,j}$ of $H$ is simulated by the node $\spec{x}_i$ of $K$,
$\lnot\spec{x}_{i,j}$ by $\lnot\spec{x}_i$, $\spec{w}_i$ by both $\spec{x}_i$
and $\lnot\spec{x}_i$, and $\spec{o}$ by $\spec{o}$. Furthermore,
$\spec{x}_{i,j}$ ($\lnot \spec{x}_{i,j}$) is simulated by any $\spec{c}_p$ that
corresponds to a clause $c_p$ that uses $x_i$ ($\lnot x_i$ resp.). We claim that
$\varphi$ is satisfiable iff $\spec{r}_1$ is simulated by $\spec{r}_2$, and
therefore, iff $H\preccurlyeq K$.

For a satisfying valuation $v$ of $\varphi$, we define a witness
$\lambda: \out_H(\spec{r}_1) \to \out_K(\spec{r}_2)$ by specifying the mapping
between target nodes of the edges (since no target node is used more than once
this is unambiguous). For $i\in\{1,\ldots,n\}$, if $v(x_i)=1$, then we map
$\spec{w}_i$ to $\spec{x}_i$, and for $j\in\{1,\ldots,k\}$ we map
$\neg \spec{x}_{i,j}$ to $\neg \spec{x}_i$ and $\spec{x}_{i,j}$ to $\spec{c}_p$
that corresponds to the clause $c_p$ with the $j$-th occurrence of $x_i$ if this
occurrence is positive, or otherwise to any clause $c_p$ with a positive
occurrence of $x_i$; if $v(x_i)=0$, then we map $\spec{w}_i$ to $\spec{x}_i$,
and for $j\in\{1,\ldots,k\}$ we map $\spec{x}_{i,j}$ to $\neg \spec{x}_i$ and
$\neg \spec{x}_{i,j}$ to $\spec{c}_p$ that corresponds to the clause $c_p$ with
the $j$-th occurrence of $x_i$ if this occurrence it is negative, or to any
clause $c_p$ with a negative occurrence of $x_i$; It is easy to verify that if
$v$ satisfies $\varphi$, $\lambda$ is a simulation witness of $\spec{r}_1$ by
$\spec{r}_2$.

Conversely, we take a witness
$\lambda: \out_H(\spec{r}_1) \to \out_K(\spec{r}_2)$ and construct a satisfying
valuation $v$ by inspecting the implied mapping on the target nodes of
$\spec{r}_1$ and $\spec{r}_2$. We observe that for $i\in\{1,\ldots,k\}$ because
of the occurrence interval $[k; k]$ the node $\spec{w}_i$ must be mapped to
either the node $\spec{x}_i$ or $\lnot\spec{x}_i$, and we set $v(x_i)=1$ or
$v(x_i)=0$ respectively. W.l.o.g. assume $\spec{w}_i$ is mapped to
$\spec{x}_i$. Because of the occurrence interval $[k; k]$, no other node can be
mapped to $\spec{x}_i$, which means that all nodes $\spec{x}_{i,j}$ must be
mapped to nodes $\spec{c}_p$. Because of the occurrence interval $\PLUS$, at
least one $\spec{x}_{i,j}$ or $\lnot \spec{x}_{i,j}$ must be mapped to every
$\spec{c}_p$, and consequently, each clause of $\varphi$ is satisfied by the
valuation $v$. \qed
\end{proof}


\section{Determinism}
\label{sec:determinism}
In this section we identify a subclass of deterministic shape expression schemas
that is arguably of practical use and for which embedding is equivalent to
containment, and therefore, enjoy desirable computational properties. We recall
that deterministic shape expression schemas (\DetShEx) forbid using the same
edge label with more than one type in a single type
definition~\cite{SBLGHPS15}. While for deterministic shape graphs using only
$\ONE$ and $\MANY$ embeddings capture containment, adding $\MAYBE$ requires
careful handling or the containment becomes intractable as we show later on. The
proposed approach can further be extended to allow $\PLUS$, however, the
additional restrictions this introduces render the obtained class impractical,
and consequently, we drop $\PLUS$ altogether. In many practical situations,
using $\MANY$ instead of $\PLUS$ is acceptable. 

Now, given a shape graph $H$ and a type $t\in N_H$, a \emph{reference} to $t$ is
any edge $e\in E_H$ that leads to $t$ i.e. $\target_H(e)=t$. A reference $e$ is
\emph{$\MANY$-closed} if $\occur_H(e)=\MANY$ or all references to $\source_H(e)$
are $\MANY$-closed.
\begin{definition}
  A shape graph $H$ is \emph{deterministic} if for every node $n\in N_H$ and
  every label $a\in\Sigma$, $n$ has at most one outgoing edge labeled with
  $a$. By $\DetShExZero$ we denote the class of all deterministic shape
  graphs. By $\DetShExZeroMinus$ we denote the class of deterministic shape
  graphs that do not use $\PLUS$ and any type using $\MAYBE$ is referenced at
  least once and all references to it are $\MANY$-closed. \qed
\end{definition}
Intuitively, we require that any type using $\MAYBE$ must be referenced and can
only be referenced (directly or indirectly) through $\MANY$. The schema in
Figure~\ref{fig:rdf-bug-reports} belongs to \DetShExZeroMinus since both uses of
the \MAYBE operator are closed by the edge $\mathtt{related}$ with interval
$\MANY$.

Interestingly, the class $\DetShExZeroMinus$ allows construction of graphs that
characterize any schema in $\DetShExZeroMinus$ up to containment.
\begin{lemma}
  \label{lemma:counter-example-det-shex-zero-down}
  For any $H\in\DetShExZeroMinus$, there exists a simple graph $G\in L(H)$ of size polynomial in the size of $H$ such that for any $K\in\DetShExZeroMinus$ we have that $G\preccurlyeq K$ implies $H\preccurlyeq K$.
\end{lemma}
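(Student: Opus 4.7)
\emph{Construction.} The plan is to build $G$ as a ``doubled'' copy of $H$: introduce two copies $n_t^+,n_t^-$ of each type $t\in N_H$, and for every $H$-edge $s\xrightarrow{a,I}t$ add edges to $G$ as follows. If $I=\ONE$, add $n_s^c\xrightarrow{a}n_t^c$ for both $c\in\{+,-\}$; if $I=\MAYBE$, add only the single edge $n_s^+\xrightarrow{a}n_t^+$; and if $I=\MANY$, add all four edges $n_s^c\xrightarrow{a}n_t^d$ for $c,d\in\{+,-\}$. The resulting $G$ is clearly simple and of size $O(|H|)$. I would then check that $G\in L(H)$ via the canonical embedding $R_0=\{(n_t^c,t)\mid t\in N_H,\,c\in\{+,-\}\}$, whose per-node witness sends every $G$-edge back to the corresponding $H$-edge; condition~3 of Definition~\ref{def:embedding} reduces to the inclusions $\ONE\subseteq\ONE$, $\ONE\subseteq\MAYBE$, $[0;0]\subseteq\MAYBE$, and $[2;2]\subseteq\MANY$, all immediate.

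\emph{Main implication.} Suppose now $G\preccurlyeq K$ for some $K\in\DetShExZeroMinus$, witnessed by a (say maximal) embedding $R'\subseteq N_G\times N_K$. The goal is to exhibit, for each $t\in N_H$, a single $m_t\in N_K$ such that \emph{both} $(n_t^+,m_t)\in R'$ and $(n_t^-,m_t)\in R'$, and then to set $R=\{(t,m_t)\mid t\in N_H\}$ and prove $R$ is an embedding of $H$ in $K$. Given such $m_t$'s, the witness $\lambda_t:\out_H(t)\to\out_K(m_t)$ sends each edge $e:t\xrightarrow{a,I}t'$ to the unique $a$-labelled edge leaving $m_t$, which exists because $n_t^+$ has a matching $a$-edge and $K$ is deterministic. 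Conditions~1 and~2 of Definition~\ref{def:embedding} transfer directly from $R'$'s witnesses. Condition~3 is verified case by case: for $I=\ONE$, $(n_t^+,m_t)\in R'$ already gives $\ONE\subseteq\occur_K(\lambda_t(e))$; for $I=\MANY$, the two $a$-edges of $n_t^+$ contribute $[2;2]$ to the sum, forcing $\occur_K(\lambda_t(e))=\MANY$ (the only basic interval in $\DetShExZeroMinus$ containing $2$); and for $I=\MAYBE$, the contribution $\ONE$ from $(n_t^+,m_t)$ combined with the contribution $[0;0]$ from $(n_t^-,m_t)$ forces $\occur_K(\lambda_t(e))\in\{\MAYBE,\MANY\}$, and in either case $\MAYBE\subseteq\occur_K(\lambda_t(e))$.

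\emph{Main obstacle.} The crucial step, and where the $\MANY$-closedness requirement on $\DetShExZeroMinus$ earns its keep, is exhibiting the common image $m_t$ for every $t$. The plan is to induct along the reference chains of $H$. In the base case, a $\MANY$-reference $s^*\xrightarrow{a^*,\MANY}t^*$ produces in $G$ two $a^*$-edges from $n_{s^*}^+$ targeting $n_{t^*}^+$ and $n_{t^*}^-$; since $K$ is deterministic, the image of $n_{s^*}^+$ has a unique $a^*$-edge that must absorb both, and its target is then the sought $m_{t^*}$. In the inductive step, a $\ONE$-reference $s\xrightarrow{a,\ONE}t$ from an ancestor $s$ with an already-established common image $m_s$ propagates $m_s$'s unique $a$-target down to $t$, again by $K$'s determinism acting on the two edges $n_s^+\xrightarrow{a}n_t^+$ and $n_s^-\xrightarrow{a}n_t^-$. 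The delicate point, and the main technical difficulty, is handling $\MAYBE$-references in the chain: they target only $n_t^+$ in the construction and therefore cannot directly anchor $n_t^-$; the $\MANY$-closedness hypothesis is precisely what lets the induction push one more step back along the reference chain of the source until a $\MANY$-edge is reached, and verifying that this recursion terminates on every type of $H$ is where most of the work of the full proof will lie.
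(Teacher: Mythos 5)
Your construction of $G$ and the check that $G\in L(H)$ are fine, and the forcing mechanism you invoke (two siblings reached through the same label from a common parent must, by determinism of $K$, share an image) is the right one. But the doubling with only the single edge $n_s^+\xrightarrow{a}n_t^+$ for a \MAYBE-edge breaks precisely at the step you flag as delicate, and it cannot be repaired within a two-copies-per-type construction. Concretely, take $H$ with types $r,s,t,o$ and edges $r\xrightarrow{a,\,\MANY}s$, $s\xrightarrow{b,\,\MAYBE}t$, $t\xrightarrow{c,\,\MAYBE}o$; this is in \DetShExZeroMinus, since the sole reference to $t$ is \MANY-closed through $s$. In your $G$ the node $n_t^-$ has no incoming edge and no outgoing edge at all, so nothing ties it to $n_t^+$. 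Now let $K$ be identical to $H$ except that the $c$-edge carries \ONE instead of \MAYBE. Then $G\preccurlyeq K$ (send $n_t^+$ to $t'$ and the isolated $n_t^-$ to $o'$; all other copies go to their namesakes), yet $H\not\preccurlyeq K$ because $r$, $s$, $t$ are forced onto $r'$, $s'$, $t'$ and $\MAYBE\not\subseteq\ONE$ on the $c$-edge --- and indeed $H\not\subseteq K$. The obstruction is structural: to pin down the image of $t$'s $c$-edge you need a $t$-node with a $c$-child \emph{and} a $t$-node without one, both driven to the same type of $K$; that requires two $s$-nodes, forced together, one having a $b$-child with a $c$-edge and the other a $b$-child without --- while a third $s$-node with no $b$-child at all is still needed to characterize $s$'s own \MAYBE-edge. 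Two copies of $s$ cannot play these three roles, and you cannot add $n_s^-\xrightarrow{b}n_t^-$ without destroying the role of $n_s^-$.

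This is why the paper does not use a fixed doubling: each type is characterized by a \emph{variable} number of nodes --- up to $2$ plus the number of \MAYBE-edges of $H$ --- and each variant of a \MAYBE-using type is threaded upward along an ascending path to the nearest \MANY-edge, so that all variants hang under distinct siblings of a single \MANY-parent and determinism of $K$ collapses them onto one type. In the example above the $r$-node gets three $a$-children realizing the three roles of $s$. Your inductive scheme for locating the common images $m_t$, and your case analysis of condition~3 once such an $m_t$ exists, are essentially correct readings of what \MANY-closedness is for; but the induction presupposes a graph in which every needed variant of every type actually sits below the closing \MANY-edge, and that is not the graph you built.
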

The precise construction of the graph $G$ that characterizes $H$ is in appendix
and here we outline the main ideas and illustrate them on an example in
Figure~\ref{fig:detshex-characteristic-graph}.
\begin{figure}[hbt]
  \centering
  \begin{tikzpicture}[xscale=1,yscale=0.75]
    \node at (-1,0.5) {$H$:};
    \begin{scope}[
    >=latex,
    punkt/.style={circle,minimum size=0.1cm,draw,fill,
      inner sep=0pt, outer sep=0.125cm},
    a/.style={color=green!50!black,semithick},
    b/.style={color=red!50!black,semithick}
      ]
    \small
    \node[punkt] (n0) at (0,0.3) {};
    \node[punkt] (n1) at (0,-0.85) {};
    \node[punkt] (n2) at (-0.75,-1.85) {};
    \node[punkt] (n3) at (0.75,-1.85) {};
    \node[punkt] (n4) at (-0.75,-3) {};
    \node[punkt] (n5) at (0.75,-3) {};
    \node[punkt] (n6) at (-1.5,-4) {};
    \node[punkt] (n7) at (0,-4) {};
    \node[punkt] (n8) at (-0.75,-5) {};
    \node[punkt] (n9) at (0.75,-5) {};
    
    \draw[a] (n0) edge[a,->] node[pos=0.4,left,black] {\MANY} (n1);
    \draw[b] (n0) edge[->,bend left] (n3);
    \draw[a] (n1) edge[->] (n2);
    \draw[b] (n1) edge[->] (n3);
    \draw[a] (n2) edge[->] (n4);
    \draw[a] (n3) edge[->] node[pos=0.4,left,black] {\MANY} (n5);
    \draw[a] (n4) edge[->] (n6);
    \draw[b] (n4) edge[->] node[pos=0.4,right,black,rotate=45] {\MAYBE} (n7);
    \draw[a] (n5) edge[->] (n7);
    \draw[b] (n5) edge[->] node[pos=0.48,right,black] {\MANY} (n9);
    \draw[b] (n6) edge[->,bend left] node[pos=0.49,left,black] {\MANY} (n2);
    \draw[a] (n6) edge[->] (n8);
    \draw[a] (n7) edge[->] (n8);
    \draw[b] (n7) edge[->] node[pos=0.4,right,black,rotate=45] {\MAYBE} (n9);
    \draw[a] (n9) edge[->,bend right] (n3);
    \draw[b] (n9) edge[->,bend left] (n8);
    \end{scope}
    \node at (3,0.5) {$G$:};
    \begin{scope}[xshift=4cm,
      >=stealth,
      punkt/.style={circle,minimum size=0.025cm,draw,fill,
        inner sep=0pt, outer sep=0.05cm},
      a/.style={color=green!50!black},
      b/.style={color=red!50!black}
      ]


    \node[punkt] (n00) at (-0.15,0.3) {};
    \node[punkt] (n10) at (-0.3,-0.85) {};
    \node[punkt] (n11) at (-0.1,-0.85) {};
    \node[punkt] (n12) at (+0.1,-0.85) {};
    \node[punkt] (n20) at (-1.05,-1.85) {};
    \node[punkt] (n21) at (-0.85,-1.85) {};
    \node[punkt] (n22) at (-0.65,-1.85) {};
    \node[punkt] (n30) at (+0.6,-1.85) {};
    \node[punkt] (n40) at (-1.05,-3) {};
    \node[punkt] (n41) at (-0.85,-3) {};
    \node[punkt] (n42) at (-0.65,-3) {};
    \node[punkt] (n50) at (+0.5,-3) {};
    \node[punkt] (n51) at (+0.75,-3) {};
    \node[punkt] (n60) at (-1.65,-4) {};
    \node[punkt] (n70) at (-0.25,-4) {};
    \node[punkt] (n71) at (+0.00,-4) {};
    \node[punkt] (n80) at (-0.90,-5) {};
    \node[punkt] (n90) at (+0.60,-5) {};
    \node[punkt] (n91) at (+0.90,-5) {};
    
    \draw[a] (n00) edge[->] (n10);
    \draw[a] (n00) edge[->] (n11);
    \draw[a] (n00) edge[->] (n12);

    \draw[b] (n00) edge[->,bend left] (n30);

    \draw[a] (n10) edge[->] (n20);
    \draw[a] (n11) edge[->] (n21);
    \draw[a] (n12) edge[->] (n22);

    \draw[b] (n10) edge[->] (n30);
    \draw[b] (n11) edge[->] (n30);
    \draw[b] (n12) edge[->] (n30);

    \draw[a] (n20) edge[->] (n40);
    \draw[a] (n21) edge[->] (n41);
    \draw[a] (n22) edge[->] (n42);

    \draw[a] (n30) edge[->] (n50);
    \draw[a] (n30) edge[->] (n51);

    \draw[a] (n40) edge[->] (n60);
    \draw[a] (n41) edge[->] (n60);
    \draw[a] (n42) edge[->] (n60);

    \draw[b] (n40) edge[->] (n70);
    \draw[b] (n41) edge[->] (n71);

    \draw[a] (n50) edge[->] (n70);
    \draw[a] (n51) edge[->] (n71);

    \draw[b] (n50) edge[->] (n90);
    \draw[b] (n50) edge[->] (n91);

    \draw[b] (n60) edge[->,bend left] (n20);
    \draw[b] (n60) edge[->,bend left] (n21);
    \draw[b] (n60) edge[->,bend left] (n22);

    \draw[a] (n60) edge[->] (n80);

    \draw[a] (n70) edge[->] (n80);
    \draw[a] (n71) edge[->] (n80);

    \draw[b] (n70) edge[->] (n90);

    \draw[a] (n90) edge[->,bend right] (n30);
    \draw[a] (n91) edge[->,bend right] (n30);

    \draw[b] (n90) edge[->,bend left] (n80);
    \draw[b] (n91) edge[->,bend left] (n80);
    \end{scope}
  \end{tikzpicture}
  \caption{Characterizing example. Different colors denote different labels.}
  \label{fig:detshex-characteristic-graph}
\end{figure}
 
%
In essence, for every type $t\in N_H$ the graph $G$ needs to contain a number of
nodes of type $t$ that serve the purpose of characterizing $t$. If $t$ has an
outgoing \MANY-edge $e$ labeled with $a$ that leads to the type $s$, then at
least one node $n$ of $G$ that characterizes $t$ needs to have at least two
outgoing edges labeled with $a$ that lead to nodes $A$ that characterize the
type $s$. When $n$ is mapped to a type $t'$ of $K$ that has an outgoing edge
$e'$ labeled with $a$ and leading to $s'$, all $a$-children $A$ must to be
mapped to $s'$. This shows that $e'$ is an \MANY-edge. Interestingly, this
observation also applies to descendants of $A$.  If the type $s$ has an outgoing
edge with label $b$ that leads to the type $u$, then any $b$-child of a node in
$A$ must have the type $u$, and furthermore, they are all mapped to a type $u'$
that is reachable from $s'$ with an edge labeled with $b$, etc. 

Now, for a type $t$ with an outgoing $\MAYBE$-edge labeled with $a$ we need two
nodes in $G$ that characterize $t$, which guarantee that the corresponding type
in $K$ uses the right occurrence interval: one node with one outgoing edge with
label $a$ and one node with no such outgoing edge. Naturally, we need to make
sure that those two nodes are mapped to the same type in $K$ and this is
accomplished by making sure there is an ascending path from every \MAYBE-edge to
every closest \MANY-edge. In general, every type in $H$ is characterized by a
number of nodes that is at most $2$ plus the number of $\MAYBE$-edges in
$H$. Lemma~\ref{lemma:counter-example-det-shex-zero-down} renders containment
and embeddings equivalent.
\begin{corollary}
  For $H,K\in\DetShExZeroMinus$, $H\subseteq K$ iff $H\preccurlyeq K$.
\end{corollary}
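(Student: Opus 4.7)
The plan is to derive the corollary as a short combination of two results already at hand: Lemma~\ref{lemma:embedding-implies-inclusion} (embedding implies containment, for arbitrary shape graphs) and Lemma~\ref{lemma:counter-example-det-shex-zero-down} (the characterizing graph for schemas in $\DetShExZeroMinus$). Since both directions of the biconditional are handed to us essentially for free, the proof is mostly a matter of chaining the two implications correctly and being careful about the class membership hypotheses.

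For the ``if'' direction, $H\preccurlyeq K$ implies $H\subseteq K$ is exactly the content of Lemma~\ref{lemma:embedding-implies-inclusion}, applied to the shape graphs $H,K\in\DetShExZeroMinus\subseteq\ShExZero$. No further work is needed here, and in particular we do not use the determinism hypothesis for this direction.

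For the ``only if'' direction, assume $H\subseteq K$, i.e.\ $L(H)\subseteq L(K)$. Apply Lemma~\ref{lemma:counter-example-det-shex-zero-down} to $H\in\DetShExZeroMinus$ to obtain a simple graph $G\in L(H)$ such that, for every $K'\in\DetShExZeroMinus$, $G\preccurlyeq K'$ implies $H\preccurlyeq K'$. By definition of $L(H)$ we have $G\preccurlyeq H$, and since $G\in L(H)\subseteq L(K)$, we also have $G\in L(K)$, which (again by definition of the language of a shape graph) means $G\preccurlyeq K$. Instantiating the characterizing property at $K'=K$, which is legitimate because $K\in\DetShExZeroMinus$, we conclude $H\preccurlyeq K$, as desired.

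There is no real obstacle here: the two lemmas together do all the work, and the only thing worth double-checking while writing it up is that both $H$ and $K$ are required to lie in $\DetShExZeroMinus$ precisely so that Lemma~\ref{lemma:counter-example-det-shex-zero-down} may be applied in this chained form. If one of the two were allowed to drift into the larger class $\DetShExZero$, the characterizing graph argument would no longer be available (as the surrounding discussion and the intractability claim for $\DetShExZero$ mentioned earlier in the section indicate), so this hypothesis is doing the essential lifting and should be flagged in the write-up.
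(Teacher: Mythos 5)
Your proposal is correct and follows exactly the route the paper intends: the ``if'' direction is Lemma~\ref{lemma:embedding-implies-inclusion}, and the ``only if'' direction instantiates the characterizing graph of Lemma~\ref{lemma:counter-example-det-shex-zero-down} at $K$, using that $G\in L(K)$ means $G\preccurlyeq K$ by the definition of the language of a shape graph. Nothing to add.
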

Consequently, by Theorem~\ref{thm:embeddings-tractable}, we obtain. 
\begin{corollary}
  \label{cor:containment-detshex-tractable}
  Containment for $\DetShExZeroMinus$ is in \PTIME.
\end{corollary}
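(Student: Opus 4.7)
The plan is to observe that Corollary~\ref{cor:containment-detshex-tractable} is an almost immediate consequence of the two results immediately preceding it, so the proof is a very short composition rather than a fresh argument. The only thing that needs checking is that the hypotheses of those two results line up cleanly with the containment problem for $\DetShExZeroMinus$.

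Concretely, given an instance $(H, K)$ of the containment problem with $H, K \in \DetShExZeroMinus$, I would proceed as follows. First, invoke the preceding corollary, which states that for $H, K \in \DetShExZeroMinus$ we have $H \subseteq K$ iff $H \preccurlyeq K$. This reduces deciding $H \subseteq K$ to deciding the embedding $H \preccurlyeq K$. Second, since every schema in $\DetShExZeroMinus$ is in particular a shape graph (using only basic occurrence intervals from $\M$), I can apply Theorem~\ref{thm:embeddings-tractable} to the pair $(H, K)$, which gives a polynomial-time procedure for testing whether $H \preccurlyeq K$. Composing the two gives a polynomial-time decision procedure for containment in $\DetShExZeroMinus$.

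There is essentially no obstacle here, since both the semantic equivalence (embedding = containment on $\DetShExZeroMinus$) and the computational tractability of embedding testing on shape graphs have already been established. The only minor point worth a line in the final write-up is that membership in $\DetShExZeroMinus$ itself is polynomial-time checkable (one simply verifies determinism, the absence of $\PLUS$, and the $\MANY$-closure condition on references to types that use $\MAYBE$), so the procedure is honestly polynomial on its natural input. With that observation the corollary follows directly.
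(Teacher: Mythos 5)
Your proposal is correct and matches the paper exactly: the corollary is obtained by combining the preceding equivalence ($H\subseteq K$ iff $H\preccurlyeq K$ for $\DetShExZeroMinus$) with Theorem~\ref{thm:embeddings-tractable} on tractability of embedding testing for shape graphs. Your extra remark that membership in $\DetShExZeroMinus$ is itself polynomial-time checkable is a harmless addition the paper leaves implicit.
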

\begin{figure*}[]
  \tikzset{poz/.style={coordinate}}
  \tikzset{aux/.style={gray, line width=0.25pt, densely dotted}}
  \tikzset{pt/.style={gray, circle, inner sep=0.05cm,fill=gray}}
  \newcommand{\poz}[2]{
    \path #2 node[poz] (#1) {};
  }
  \centering
  \footnotesize
  \begin{tikzpicture}[xscale=1.75,yscale=1.5, >=latex]
    \begin{scope}
      \path[use as bounding box] (0.5,-2.5) rectangle (7,0.5);
      \node at (-1,0) {$H$:};
      \node (r) at (0,0) {\spec{r}};
      \node (v) at (0,-1) {\spec{v}};
      \node (o) at (0,-2) {\spec{o}};
      \draw[bend angle=50] (r) edge[->, bend left] node[left]{$\spec{x}_1$} (v);
      \draw[bend angle=05] (r) edge[->,bend right] node[left]{$\spec{x}_2$} (v);
      \draw[bend angle=65] (r) edge[->, bend right] node[left]{$\spec{x}_3$} (v);
      \draw[bend angle=40] (v) edge[->, bend left] node[left]{$\spec{t}\MAYBE$}(o);
      \draw[bend angle=40] (v) edge[->, bend right] node[left]{$\spec{f}\MAYBE$}(o);
    \end{scope}

    \node at (1,0) {$K$:};
    \begin{scope}[xshift=6cm, yshift=-1cm]



      \node (v0) at (2,0) {$\spec{v}^{0}$};

      \node (r01) at (1,-1) {$\spec{r}^{0}_{1}$};
      \node (r02) at (1,0) {$\spec{r}^{0}_{2}$};
      \node (r03) at (1,1) {$\spec{r}^{0}_{3}$};

      \node (v) at (0,0) {$\spec{v}$};

      \draw[bend angle=20] (r01) edge[->,bend right] node[above, sloped] {$\spec{x}_1$} (v0);
      \draw[bend angle=25] (r01) edge[->,bend left] node[above, sloped] {$\spec{x}_2$} (v);
      \draw[bend angle=05] (r01) edge[->,bend right] node[above, sloped] {$\spec{x}_3$} (v);

      \draw[bend angle=25] (r02) edge[->] node[above, sloped] {$\spec{x}_2$} (v0);
      \draw[bend angle=25] (r02) edge[->,bend right] node[above, sloped] {$\spec{x}_1$} (v);
      \draw[bend angle=25] (r02) edge[->,bend left] node[above, sloped] {$\spec{x}_3$} (v);

      \draw[bend angle=20] (r03) edge[->,bend left] node[above, sloped] {$\spec{x}_3$} (v0);
      \draw[bend angle=25] (r03) edge[->,bend right] node[above, sloped] {$\spec{x}_2$} (v);
      \draw[bend angle=05] (r03) edge[->,bend left] node[above, sloped] {$\spec{x}_1$} (v);

      \node (r11) at (-1,-0.75) {$\spec{r}^1_1$};
      \node (r12) at (-1,0) {$\spec{r}^1_2$};
      \node (r13) at (-1,0.75) {$\spec{r}^1_3$};

      \node (v1) at (-1.66,0) {$\spec{v}^1$};

      \draw[bend angle=20] (r11) edge[->,bend left] node[above, sloped] {$\spec{x}_1$} (v1);
      \draw[bend angle=25] (r11) edge[->,bend right] node[above=-1pt, sloped] {$\spec{x}_2$} (v);
      \draw[bend angle=03] (r11) edge[->,bend left] node[above, sloped] {$\spec{x}_3$} (v);

      \draw[bend angle=20] (r12) edge[->] node[above, sloped] {$\spec{x}_2$} (v1);
      \draw[bend angle=20] (r12) edge[->,bend left] node[above, sloped] {$\spec{x}_1$} (v);
      \draw[bend angle=20] (r12) edge[->,bend right] node[above, sloped] {$\spec{x}_3$} (v);

      \draw[bend angle=20] (r13) edge[->,bend right] node[above=-1pt, sloped] {$\spec{x}_3$} (v1);
      \draw[bend angle=35] (r13) edge[->,bend left] node[pos=0.4,above, sloped] {$\spec{x}_2$} (v);
      \draw[bend angle=03] (r13) edge[->,bend left] node[above, sloped] {$\spec{x}_1$} (v);

      \node (o) at (-2.66,0) {$\spec{o}$};

      \poz{c1}{(0,-1.75)}
      \poz{c2}{(-2.33,-1.05)}

      \draw[->] (v) .. controls (c1) and (c2) .. node[pos=0.5,above, sloped] {$\spec{t}\MAYBE$}(o);

      \poz{c3}{(0,1.75)}
      \poz{c4}{(-2.33,1.05)}

      \draw[->] (v) .. controls (c3) and (c4) .. node[pos=0.5, above, sloped] {$\spec{f}\MAYBE$}(o);

      \draw[bend angle=20] (v1) edge[->,bend left] node[above, sloped] {$\spec{t}$} (o);
      \draw[bend angle=20] (v1) edge[->,bend right] node[above, sloped] {$\spec{f}$} (o);

      \node (vd11) at (-4,-1) {$\spec{v}^1_1$};
      \node (vd12) at (-4,-0.66) {$\spec{v}^1_2$};
      \node (vd13) at (-4,-.33) {$\spec{v}^1_3$};

      \node (vd21) at (-4,0.33) {$\spec{v}^2_1$};
      \node (vd22) at (-4,0.66) {$\spec{v}^2_2$};
      \node (vd23) at (-4,1) {$\spec{v}^2_3$};

      \node (rd1) at (-5,-0.5) {$\spec{r}^{d}_1$};
      \node (rd2) at (-5,0.5) {$\spec{r}^{d}_2$};

      \draw[bend angle=25] (rd1) edge[->,bend right] node[above, sloped] {$\spec{x}_{1}$} (vd11);
      \draw[bend angle=05] (rd1) edge[->,bend right] node[above, sloped] {$\spec{x}_{2}$} (vd12);
      \draw[bend angle=15] (rd1) edge[->,bend left] node[above, sloped] {$\spec{x}_{3}$} (vd13);

      \draw[bend angle=15] (rd2) edge[->,bend right] node[above, sloped] {$\spec{x}_{1}$} (vd21);
      \draw[bend angle=05] (rd2) edge[->,bend left] node[above, sloped] {$\spec{x}_{2}$} (vd22);
      \draw[bend angle=25] (rd2) edge[->,bend left] node[above, sloped] {$\spec{x}_{3}$} (vd23);

      \poz{d111}{(-3.66,-1.05)}
      \poz{d112}{(-2.66,-1.15)}

      \draw[->] (vd11) .. controls (d111) and (d112) .. node[pos=0.235, above, sloped] {$\spec{t}$} (o);

      \poz{d121}{(-3.66,-0.75)}
      \poz{d122}{(-2.8,-0.8)}

      \draw[->] (vd12) .. controls (d121) and (d122) .. node[pos=0.25, above, sloped] {$\spec{f}$} (o);

      \poz{d131}{(-3.66,-0.45)}
      \poz{d132}{(-3.1,-0.4)}

      \draw[->] (vd13) .. controls (d131) and (d132) .. node[pos=0.65,above, sloped] {$\spec{t}\MAYBE$}(o);

      \poz{d131}{(-3.66,-0.15)}
      \poz{d132}{(-3.2,-0.1)}

      \draw[->] (vd13) .. controls (d131) and (d132) .. node[pos=0.35,above, sloped] {$\spec{f}\MAYBE$}(o);

      \poz{d211}{(-3.66,0.15)}
      \poz{d212}{(-3.2,0.1)}

      \draw[->] (vd21) .. controls (d211) and (d212) .. node[pos=0.55,above, sloped] {$\spec{f}\MAYBE$}(o);

      \poz{d211}{(-3.66,0.45)}
      \poz{d212}{(-3.1,0.4)}

      \draw[->] (vd21) .. controls (d211) and (d212) .. node[pos=0.4, above, sloped] {$\spec{t}\MAYBE$}(o);

      \poz{d221}{(-3.66,0.75)}
      \poz{d222}{(-2.8,0.8)}

      \draw[->] (vd22) .. controls (d221) and (d222) .. node[pos=0.25, above, sloped] {$\spec{t}$} (o);

      \poz{d231}{(-3.66,1.05)}
      \poz{d232}{(-2.66,1.15)}

      \draw[->] (vd23) .. controls (d231) and (d232) .. node[pos=0.235, above, sloped] {$\spec{f}$} (o);
    \end{scope}
  \end{tikzpicture}  
  \caption{An example of reduction on $\varphi=(x_1\land \lnot x_2) \lor 
    (x_2\land \lnot x_3)$.}
\label{fig:detshex-reduction}
\end{figure*}  


We observe that lifting the additional restrictions leads to intractability.
\begin{theorem}
  \label{thm:containment-detshex-intractable}
  Containment for $\DetShExZero$ is \coNP-hard.
\end{theorem}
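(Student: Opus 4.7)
The plan is to reduce from DNF-TAUT, which is \coNP-complete, to containment in \DetShExZero. Given a DNF $\varphi=C_1\lor\ldots\lor C_m$ over variables $x_1,\ldots,x_n$, I would construct two deterministic shape graphs $H$ and $K$, following the blueprint in Figure~\ref{fig:detshex-reduction}, so that $H\subseteq K$ if and only if $\varphi$ is tautological. The schema $H$ is the small one: a root type $\spec{r}$ with $\ONE$-edges labeled $\spec{x}_1,\ldots,\spec{x}_n$ to a common target type $\spec{v}$, which in turn carries $\spec{t}\MAYBE$ and $\spec{f}\MAYBE$ edges to a leaf type $\spec{o}$. An instance of $H$ thus essentially encodes a Boolean valuation: the child reached by the $\spec{x}_k$-edge carries a $\spec{t}$-edge (interpret as $x_k=1$), an $\spec{f}$-edge (interpret as $x_k=0$), both, or neither.

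I would design $K$ so that its root types jointly cover every instance of $H$ \emph{except} those encoding valuations that falsify $\varphi$. Three families suffice. For each $k$, the type $\spec{r}^0_k$ routes its $\spec{x}_k$-child to an empty type $\spec{v}^0$ (no outgoing edges) and its remaining children to the permissive $\spec{v}$; this absorbs instances where some child is ``unset.'' For each $k$, the type $\spec{r}^1_k$ routes its $\spec{x}_k$-child to a type $\spec{v}^1$ that demands \emph{both} a $\spec{t}$- and an $\spec{f}$-edge; this absorbs instances with a locally inconsistent child. Finally, for each clause $C_p$, a type $\spec{r}^d_p$ routes its $\spec{x}_k$-child to a clause-specific type $\spec{v}^p_k$ forcing the value that $C_p$ prescribes for $x_k$ (mandatory $\spec{t}$ or $\spec{f}$ for literals of $C_p$, and $\MAYBE$-edges for variables absent from $C_p$). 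Every root type uses the distinct labels $\spec{x}_1,\ldots,\spec{x}_n$ and every other type uses only $\spec{t}$ and $\spec{f}$, so both schemas lie in \DetShExZero; note that the $\MAYBE$-edges here are \emph{not} $\MANY$-closed, placing the construction outside \DetShExZeroMinus and consistent with Corollary~\ref{cor:containment-detshex-tractable}.

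For correctness I would argue both directions. A falsifying valuation $v$ of $\varphi$ yields a canonical counter-example $G_v$: take a root with $n$ distinct children $u_k$, each $u_k$ bearing a single outgoing edge, $\spec{t}$ if $v(x_k)=1$ and $\spec{f}$ otherwise, each leading to its own leaf. Then $G_v\in L(H)$, while $G_v\notin L(K)$ because no child is empty, none carries both labels, and $v$ satisfies no $C_p$; this excludes every $\spec{r}^0_k$, every $\spec{r}^1_k$, and every $\spec{r}^d_p$ respectively. Conversely, given any counter-example $G$, I would inspect the children of the root: the failure of every $\spec{r}^0_k$-typing forces every child to carry at least one of $\{\spec{t},\spec{f}\}$, the failure of every $\spec{r}^1_k$-typing forces at most one, and together these determine a valuation; the failure of every $\spec{r}^d_p$-typing then asserts precisely that this valuation satisfies no clause.

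The main subtlety I anticipate lies in the converse direction when $G$ shares a target among several $\spec{x}_k$-edges. One must verify that rejecting every root type of $K$ really forbids \emph{any} assignment of types to such a shared child --- for instance, that a shared target cannot simultaneously assume the restrictive role $\spec{v}^0$ for one incoming edge while fulfilling the demands of some $\spec{r}^d_p$ for another. Once this case analysis is in place, the biconditional $H\subseteq K$ iff $\varphi$ is tautological follows, and \coNP-hardness of containment for \DetShExZero is inherited from the \coNP-completeness of DNF-TAUT.
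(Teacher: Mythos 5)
Your proposal is correct and follows essentially the same route as the paper: a reduction from DNF tautology in which $H$ encodes valuations via $\spec{x}_k$-labeled edges to a type with optional $\spec{t}$/$\spec{f}$ children, and $K$ covers the degenerate cases (no value, both values) with the $\spec{r}^0_k$ and $\spec{r}^1_k$ families and the satisfying valuations with one clause type per disjunct. The shared-target subtlety you flag in the converse direction is real but is not treated in the paper's sketch either; your construction and correctness argument match Figure~\ref{fig:detshex-reduction} exactly.
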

\begin{proof}[sketch]
  We prove it by reduction from tautology of DNF formulas and we illustrate the
  reduction on an example of
  $\varphi=(x_1\land \lnot x_2) \lor (x_2\land \lnot x_3)$. We construct two
  deterministic schemas $H$ and $K$ presented in
  Figure~\ref{fig:detshex-reduction}.  
  
  Essentially, $H$ defines valuations of the variables of $\varphi$: a node with
  the root type $\spec{r}$ has outgoing edges labeled with the name of the
  variable leading to a node of type $\spec{v}$ that represents the value of the
  variable $\spec{t}$ or $\spec{f}$. Because $\DetShExZero$ does not allow
  disjunction, nodes of type $\spec{v}$ may also have both outgoing edges
  $\spec{t}$ and $\spec{f}$, or neither of them. These cases are covered in $K$
  by the types $\spec{r}^1_i$'s and $\spec{r}^0_i$'s respectively. The types
  $\spec{r}^{d}_j$'s capture precisely the valuations that satisfy the clauses
  of $\varphi$. Naturally, $H\subseteq K$ iff $\varphi$ is a tautology. \qed
\end{proof}


\section{Shape graphs}
\label{sec:shex-zero}
In this section we consider shape graphs $\ShExZero$, which correspond to the
subclass $\ShEx(\RBEZero)$ of shape expression schemas that use only $\RBEZero$
expression for type definitions. First, we show that the size of a
counter-example is at most exponential and that the bound is tight. Then, we
investigate the complexity of the containment problem and show that it is
\EXP-complete.

\subsection{Counter-example}
In the view of Theorem~\ref{thm:containment-detshex-intractable}, the immediate
question is whether the size of a counter-example of \ShExZero is bounded by a
polynomial. We show, however, that a counter-example might need to have an
exponential number of nodes.
\begin{lemma}
  \label{lemma:shex-zero-small-example}
  For any $n$, there exist two shape graphs $H$ and $K$ such that $H\not\subseteq K$ and the smallest graph $G\in L(H)\minus L(K)$ is of size exponential in $n$.
\end{lemma}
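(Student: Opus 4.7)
The plan is to construct, for each $n$, two shape graphs $H_n$ and $K_n$ of size polynomial in $n$ such that $H_n\not\subseteq K_n$ but every $G\in L(H_n)\minus L(K_n)$ has at least $2^n$ nodes. The central phenomenon to exploit is the implicit disjunction of shape graph semantics already illustrated in Figure~\ref{fig:embedding-does-not-imply-inclusion}: a single graph can satisfy a schema through a typing that assigns \emph{different} types to structurally similar nodes, so a cleverly engineered $K_n$ can ``absorb'' any graph with a local defect via type-splitting, even though $K_n$ has no explicit disjunction.

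First, I would design $H_n$ so that any non-trivially-typed subgraph is forced to embed a complete depth-$n$ branching structure. Concretely, I would take types $r,t_0,\dots,t_n$ with $r\to c\dbl t_0$, $t_i\to a\dbl t_{i+1}\shuffle b\dbl t_{i+1}$ for $i<n$, and $t_n\to\epsilon$. Any $r$-typed node in a graph must have a $c$-child that anchors a full binary tree of depth $n$, contributing at least $2^{n+1}-1$ internal nodes; the empty graph and other graphs with no $r$-typed node are harmless because they will also satisfy $K_n$.

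Second, I would construct $K_n$ so that $L(K_n)$ covers every graph in $L(H_n)$ that is strictly smaller than this canonical tree, while missing one specific exponential graph that witnesses $H_n\not\subseteq K_n$. The plan is to replicate the chain of types from $H_n$ into several parallel ``alternative'' chains $t_i^{(0)},t_i^{(1)},\dots$ whose outgoing edges use $\MAYBE$ and $\MANY$ intervals judiciously, so that whenever a graph has any level at which some $a$- or $b$-branch is missing, a non-uniform retyping along the defect can satisfy $K_n$. The complete depth-$n$ tree, being too regular, will force every $K_n$-typing to reduce to a single uniform assignment which then fails a globally-enforced constraint.

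The main obstacle is the precise design of $K_n$ so that it simultaneously (i) excludes the canonical complete binary tree, verified via a global counting/parity argument that any consistent typing collapses to a uniform one, and (ii) admits a valid typing for every strictly smaller graph in $L(H_n)$, verified by explicitly building a $K_n$-typing from any ``defect'' in the branching pattern. Step (ii) is the delicate one: it requires an inductive construction on $n$ that, given any non-complete subtree, exhibits a node at which one of the two mandatory branches is missing and uses that slack to retype all ancestors via the alternative chain. Once both (i) and (ii) are in place, the bound $|G|\geq 2^n$ for any counter-example follows since every graph smaller than the canonical tree either fails $H_n$ at the root or is a proper partial tree covered by $K_n$.
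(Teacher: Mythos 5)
Your construction of $H_n$ rests on a false premise: a node of type $r$ does \emph{not} anchor $2^{n+1}-1$ distinct nodes. The rules $t_i\to a\dbl t_{i+1}\shuffle b\dbl t_{i+1}$ are satisfied by a chain $m_0,\dots,m_n$ of only $n+1$ nodes in which each $m_i$ carries one $a$-edge and one $b$-edge both pointing to $m_{i+1}$; the ``complete binary tree'' collapses to a linear-size dag by sharing subtrees. This is precisely the obstacle the paper's proof is organized around (it notes explicitly that without further machinery the counter-example ``may be small as it suffices to use a dag''). As a consequence, the failure mode you design $K_n$ to detect --- a level at which an $a$- or $b$-branch is \emph{missing} --- simply never occurs: every small graph in $L(H_n)$ is a collapsed dag with both branches present everywhere, and is locally indistinguishable from the full tree. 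Since \ShExZero typing depends only on the labels of a node's outgoing edges and the types of their targets, no schema $K_n$ can separate the collapsed dag from the complete tree by hunting for local defects, and the ``global counting/parity argument'' you invoke to exclude the complete tree is not expressible in this purely local semantics.

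What is missing is a mechanism that forces exponentially many nodes to be pairwise \emph{non-mergeable}, i.e., distinguishable by their outbound neighborhoods. The paper achieves this by decorating each leaf with an optional subset of $n$ extra edge labels $\spec{a}_1,\dots,\spec{a}_n$ and then engineering $K$ (via the auxiliary types $\spec{s}^{(j)}_{i,M,d}$ and $\spec{p}^{(j)}_{i,d}$) to accept every graph in which some node at depth $i$ fails the condition ``all leaves of its left subtree use $\spec{a}_i$ and all leaves of its right subtree omit $\spec{a}_i$.'' Any surviving counter-example must therefore realize all $2^n$ distinct leaf subsets, which cannot be done with fewer than $2^n$ leaf nodes. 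Your proposal would need to be augmented with an analogous anti-sharing device before the size bound can be argued; as written, the intended $K_n$ either misses the small dag realizations of $H_n$ (so a polynomial-size counter-example survives) or cannot be made to exclude the exponential witness.
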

\begin{proof}
In our construction the counter-example i.e., $G\in L(H)\minus L(K)$, is
essentially a binary tree of depth $n$ modeled with the rules 
(for $i\in\{1,\ldots,n\}$)
\begin{align*}
  &\smash{\spec{t}^{(i)} \rightarrow 
    \spec{L}\dbl{} \spec{t}^{(i+1)} \shuffle 
    \spec{R}\dbl\spec{t}^{(i+1)}}\\[-2pt]
\intertext{
The leaves of this tree store each a subset of
$A=\{\spec{a}_1,\ldots,\spec{a}_n\}$, modeled with the two rules}
&\begin{aligned}
  &\smash{\spec{t}^{(n+1)} \rightarrow 
    \spec{a}_1\dbl{} \spec{t}_\spec{o}^\MAYBE \shuffle 
    \ldots \shuffle
    \spec{a}_n\dbl{} \spec{t}_\spec{o}^\MAYBE} &
  &\spec{t}_\spec{o} \rightarrow \epsilon\\[-2pt]
\end{aligned}
\intertext{
The schema $H$ consists exactly of all the above rules while the schema $K$
contains all but the rule defining type $\smash{\spec{t}^{(1)}}$. Clearly, at
this point a counter-example of $H\not\subseteq K$ exists, one whose root node
has type $\smash{\spec{t}^{(1)}}$ in $H$ but no type in $K$, however, it may be
small as it suffices to use a dag. To eliminate small counter-examples, by
adding them to the language of $K$, we ensure that all leaves of the counter-example 
are labeled with distinct subsets of $A$. In essence, we require in the
counter-example a node at level $i$ to have all leaves labeled with subset with
a set containing $\spec{a}_i$ and all leaves of its right subtree are labeled
with subsets missing $\spec{a}_i$. For that purpose, we introduce types
$\smash{\spec{s}^{(j)}_{i,\ONE,d}}$ ($\smash{\spec{s}^{(j)}_{i,\NONE,d}}$), 
which identify nodes at level $j$ that are
using (missing resp.) the symbol $\spec{a}_i$; 
the additional parameter $d\in\{\spec{L},\spec{R}\}$ is used to
handle disjunction and essentially indicates the subtree from which the usage
information comes from. The rules for leaves are (for $i\in\{1,\ldots,n\}$,
$M\in\{\NONE,\ONE\}$, and $d\in\{\spec{L},\spec{R}\}$)}
  &\smash{\spec{s}^{(n+1)}_{i,M,d}} \rightarrow
  \smash{\spec{a}_1\dbl{} \spec{t}_\spec{o}^\MAYBE \shuffle 
  \ldots\shuffle
  \spec{a}_{i-1}\dbl{} \spec{t}_\spec{o}^\MAYBE \shuffle{}} \\
  &\hspace{7em}\smash{\spec{a}_{i}\dbl{}\spec{t}_\spec{o}^M \shuffle
  \spec{a}_{i+1}\dbl{} \spec{t}_\spec{o}^\MAYBE \shuffle
  \ldots\shuffle
  \spec{a}_{n}\dbl{} \spec{t}_\spec{o}^\MAYBE}\\[-2pt]
\intertext{
The information of using a symbol $\spec{a}_i$ in a branch is propagated upward
but only to the level $i+1$ with the rules (for $i\in\{1,\ldots,n\}$,
$j\in\{i+1,\ldots,n\}$, and $M\in\{\NONE,\ONE\}$)}
&\smash{\spec{s}_{i,M,\spec{L}}^{(j)} 
    \rightarrow
    {\spec{L}\dbl{}\spec{s}_{i,M,\spec{L}}^{(j+1)\,\MAYBE}}
    \shuffle
    {\spec{L}\dbl{}\spec{s}_{i,M,\spec{R}}^{(j+1)\,\MAYBE}}
    \shuffle
    \spec{R}\dbl{}\spec{t}^{(j)}}\\[2pt]
&\smash{\spec{s}_{i,M,\spec{R}}^{(j)} 
    \rightarrow
    \spec{L}\dbl{}\spec{t}^{(j)}
    \shuffle
    {\spec{R}\dbl{}\spec{s}_{i,M,\spec{L}}^{(j+1)\,\MAYBE}}
    \shuffle
    {\spec{R}\dbl{}\spec{s}_{i,M,\spec{R}}^{(j+1)\,\MAYBE}}}\\[-2pt]
\intertext{
Finally, a tree is invalid for our purposes if a node at depth $i$ is missing
the symbol $\spec{a}_i$ in a leaf of its left subtree or is using the symbol
$\spec{a}_i$ in a leaf of its right subtree. This situation is identified and
propagated to the root node with the rules (for $i\in\{1,\ldots,n\}$ and
$j\in\{1,\ldots,i-1\}$)}
  &\smash{\spec{p}_{i,\spec{L}}^{(i)} 
    \rightarrow
    {\spec{L}\dbl{}\spec{s}_{i,\NONE,\spec{L}}^{(i+1)\,\MAYBE}}
    \shuffle
    {\spec{L}\dbl{}\spec{s}_{i,\NONE,\spec{R}}^{(i+1)\,\MAYBE}}
    \shuffle
    \spec{R}\dbl{}\spec{t}^{(i+1)}}\\[2pt]
  &\smash{\spec{p}_{i,\spec{R}}^{(i)} 
    \rightarrow
    \spec{L}\dbl{}\spec{t}^{(i+1)}
    \shuffle
    {\spec{R}\dbl{}\spec{s}_{i,\ONE,\spec{L}}^{(i+1)\,\MAYBE}}
    \shuffle
    {\spec{R}\dbl{}\spec{s}_{i,\ONE,\spec{R}}^{(i+1)\,\MAYBE}}}\\[2pt]
  &\smash{\spec{p}_{i,\spec{L}}^{(j)} 
    \rightarrow
    {\spec{L}\dbl{}\spec{p}_{i,\spec{L}}^{(j+1)\,\MAYBE}}
    \shuffle
    {\spec{L}\dbl{}\spec{p}_{i,\spec{R}}^{(j+1)\,\MAYBE}}
    \shuffle
    \spec{R}\dbl{}\spec{t}^{(j+1)}}\\[2pt]
  &\smash{\spec{p}_{i,\spec{R}}^{(j)} 
    \rightarrow
    \spec{L}\dbl{}\spec{t}^{(j+1)}
    \shuffle
    {\spec{R}\dbl{}\spec{p}_{i,\spec{L}}^{(j+1)\,\MAYBE}}
    \shuffle
    {\spec{R}\dbl{}\spec{p}_{i,\spec{R}}^{(j+1)\,\MAYBE}}}\\[-12pt]
\end{align*}
Now, the claim, proven with a simple induction, is that for any $G\in L(H)$
unless $G$ has an exponential tree, any node that has the type
$\smash{\spec{t}^{(1)}}$ of $H$ also has a type $\smash{\spec{p}_{i,d}^{(1)}}$
of $K$ for some $i\in\{1,\ldots,n\}$, $M\in\{\NONE,\ONE\}$, and
$d\in\{\spec{L},\spec{R}\}$. \qed
\end{proof}
The lower bound on the size of a minimal counter-example for \ShExZero is tight.
\begin{theorem}
  \label{thm:shex-zero-exp-example}
  For any $H,K\in\ShExZero$ such that $H\nsubseteq K$ there exists a graph
  $G\in L(H)\minus L(K)$ whose size is at most exponential in the size of $H$
  and $K$.
\end{theorem}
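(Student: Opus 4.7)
The plan is to start from any counter-example $G_0 \in L(H) \minus L(K)$ and compress it to a counter-example $G$ of single-exponential size. The key observation is that what matters about a node is its \emph{profile} $\pi(n) := (\Typing_{G_0:H}(n), \Typing_{G_0:K}(n)) \in 2^{N_H} \times 2^{N_K}$, of which there are at most $2^{|N_H|+|N_K|}$ distinct values, which is exponential in $|H|+|K|$. Additionally, because every type rule is an \RBEZero-expression over basic intervals, the satisfaction of such an expression depends on the count of same-labelled edges only up to a polynomial threshold $N$, chosen to strictly exceed every finite upper bound and every lower bound appearing in any $\delta_H(t)$ or $\delta_K(s)$.

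First I would let $\Pi$ denote the set of profiles actually occurring in $G_0$ and, for each $P \in \Pi$, fix a representative $r_P \in N_{G_0}$ with $\pi(r_P) = P$, recording $\nu_P(a, P')$, the number of $a$-labelled edges of $G_0$ going from $r_P$ to a node of profile $P'$. Then I would define $G$ on the node set $\Pi \times \{1, \ldots, N\}$ and, for every source $(P, i)$, label $a$ and target profile $P'$, add edges from $(P, i)$ to $(P', 1), \ldots, (P', \min(\nu_P(a, P'), N))$. The graph is simple, has $N \cdot 2^{|N_H|+|N_K|}$ nodes (single-exponential), and all copies of a given profile have identical outgoing neighbourhoods, so by symmetry both maximal typings on $G$ depend only on the profile, not on the copy index.

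The core step is to prove that the maximal typings on $G$ agree with the profile-indexed assignments $\tau_H((P, i)) := P_1$ and $\tau_K((P, i)) := P_2$. The $\supseteq$ direction is routine: a witness at $r_P$ for any $t \in P_1$ in $G_0$ transfers to $(P, i)$ in $G$ because truncation at $N$ is invisible to \RBEZero-expressions. The $\subseteq$ direction proceeds by contradiction: any valid typing $T' \supsetneq \tau_H$ on $G$ is transferred back to $G_0$ via $T''(n) := T'(\pi(n), 1)$, producing a valid typing of $G_0$ strictly exceeding the maximal typing $\Typing_{G_0:H}$, a contradiction. The same argument works for $K$. Once these equalities are established, $G_0 \in L(H)$ gives $P_1 \neq \emptyset$ for every $P$, hence $G \in L(H)$, while the witness of $G_0 \notin L(K)$ provides some $P \in \Pi$ with $P_2 = \emptyset$, so the copies of $P$ in $G$ have no $K$-type and $G \notin L(K)$.

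The hardest part is validating the back-transfer in the $\subseteq$ direction. The subtlety is that two nodes of $G_0$ may share the same profile while having different outgoing bags, because \RBEZero-expressions with absorbing ($\MANY$ or $\PLUS$) terms are insensitive to excess counts. Consequently, transferring the witness for a type $t^*$ at $(P, 1)$ in $G$ to the representative $r_P$ in $G_0$ requires distributing surplus edges (those beyond $N$) onto absorbing terms; the key combinatorial fact is that whenever $\nu_P(a, P') > N$, the expression of every $t \in P_1$ must already contain an absorbing $(a \dbl s)$-term for some $s \in T'(P')$ (otherwise the total $a$-count at $r_P$ would exceed the finite upper bound and $r_P$ could not satisfy $\delta_H(t)$ in $G_0$), so the extras can always be accommodated without breaking any constraint.
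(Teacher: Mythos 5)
Your overall strategy (exponentially many profiles, plus a polynomial threshold $N$ on multiplicities) matches the paper's two-part plan, but the step you yourself flag as the hardest one is aimed at the wrong difficulty, and the $\subseteq$ direction of your core step is false as stated. The real problem is not the surplus edges beyond $N$ at the representative; it is that you impose the \emph{representative's} out-neighbourhood on \emph{every} copy of its profile, and the maximal typing of the resulting graph --- being a greatest fixpoint --- can then exploit self-supporting cycles among representatives that in $G_0$ were blocked only by non-representative nodes. Concretely, take $\Sigma=\{a\}$, let $H$ have the single rule $t\rightarrow (a\dbl t)^\MANY$ and $K$ the single rule $s\rightarrow (a\dbl s)^\ONE$, and let $G_0$ be the counter-example $n_1\xrightarrow{a}n_2$ with $n_2$ a sink. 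Both nodes have profile $P=(\{t\},\emptyset)$, so $G_0\in L(H)\minus L(K)$. If you pick $r_P=n_1$, then $\nu_P(a,P)=1$ and every copy $(P,i)$ of your graph $G$ receives exactly one $a$-edge to $(P,1)$; in particular $(P,1)$ carries an $a$-labelled self-loop. Assigning $s$ to every node of $G$ is now a valid typing, since each signature equals $\{\ml a\dbl s\mr\}=L((a\dbl s)^\ONE)$, so $\Typing_{G:K}$ is total and $G\in L(K)$: your $G$ is not a counter-example. Accordingly, the back-transferred typing $T''(n_1)=T''(n_2)=\{s\}$ is not valid on $G_0$, because $n_2$ --- a non-representative whose outgoing bag differs from $r_P$'s --- has no outgoing edge at all; your absorbing-term argument never even applies here, since $\nu_P(a,P)=1\le N$.

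So the gap is genuine: nothing in your argument rules out $\Typing_{G:K}\supsetneq\tau_K$, and without that bound the copies of the witnessing profile may acquire a $K$-type, destroying non-membership in $L(K)$. In the example above the other choice $r_P=n_2$ happens to work, but you give no criterion for selecting representatives and no argument that a good choice always exists; in general a single representative per profile need not suffice, because validity of the back-transferred typing must hold at \emph{every} node of a profile, not just at the one whose neighbourhood you copied. This is exactly where the paper's stated proof diverges from yours: after establishing that there are at most exponentially many kinds, it bounds the out-degree of a \emph{minimal} counter-example by a pumping argument, rather than rebuilding the graph from one representative per kind and transferring typings back. Your forward direction (that $G\in L(H)$) and the size accounting are fine; what is missing is a genuine argument that the compression preserves the failure of $K$.
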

The proof consists of two parts. The first shows that there are at most
exponentially many kinds of nodes, and we use the same argument in the proof of
Theorem~\ref{thm:counter-example-shex} in Section~\ref{sec:shex}. The second
part uses a pumping argument to show that the outbound degree of a minimal
counter-example is polynomially bounded.
\subsection{Complexity}
The upper bound on the complexity of containment for \ShExZero is obtained with a
reduction from nondeterministic top-down tree automata known to be
\EXP-complete~\cite{tata07}. The reduction is not trivial because \RBEZero does
not allow directly the disjunction necessary to express nondeterminism, and the
proof need to account for graphs that do not represent trees.
\begin{theorem}
  \label{thm:shex-zero-pspace-hard}
  Containment for $\ShExZero$ is \EXP-hard.
\end{theorem}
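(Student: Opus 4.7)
The plan is to reduce the \EXP-complete universality problem for nondeterministic top-down tree automata (NTDTA) over a fixed-arity ranked alphabet, say binary trees over $\Sigma$, to containment for $\ShExZero$. Given an NTDTA $A=(Q,\Sigma,q_0,\delta)$, I would construct two shape graphs $H$ and $K$ of polynomial size in $|A|$ such that $A$ is universal iff $H\subseteq K$. I encode $\Sigma$-labeled binary trees as simple graphs whose nodes carry at most one outgoing edge labeled $\spec{L}$ and $\spec{R}$ (pointing to children) and exactly one outgoing edge labeled $\spec{sym}$ to one of $|\Sigma|$ designated symbol sinks. The shape graph $H$ is tailored to accept precisely such ``tree encodings'': it uses $\ONE$ on the $\spec{sym}$ edge for every internal node, and internal nodes are introduced for every arity so that no disjunction is needed in~$H$.

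The crux of the reduction is to express ``tree accepted by $A$'' in $K$ without using the disjunction operator of $\RBE$. The idea is to exploit the implicit disjunction provided by the maximal typing: the signature $\sign_G^T(n)$ already takes a disjunction over all types that $T$ assigns to each child, so listing every possibility with a $\MAYBE$ multiplicity in the parent's type definition lets the parent pick any one of them. Concretely, for every triple $(q,a,\tau)$ with $\tau=(q_L,q_R)\in\delta(q,a)$, I introduce a type $t_{q,a,\tau}$ in $K$ whose definition is the $\shuffle$ of $\spec{sym}\dbl\,\text{(sink for $a$)}$, an $\spec{L}\dbl t_{q_L,a',\tau'}^{\MAYBE}$ clause for every valid $(a',\tau')$ from $q_L$, and symmetrically for $q_R$; the initial root type aggregates the $t_{q_0,a,\tau}$'s by being pre-populated with $\MAYBE$-clauses for every such starting triple. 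Finally, I augment $K$ with a family of ``junk'' types so that every graph node whose local structure is not a legal tree-encoding position (wrong number of $\spec{sym}$ or $\spec{L},\spec{R}$ edges, extra in-edges, etc.) is absorbed into $L(K)$; this is engineered so that the junk types never apply to nodes of a faithful tree encoding.

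The main step to prove is that for any $G\in L(H)$ which is a genuine tree encoding, the maximal typing $\Typing_{G:K}$ assigns $t_{q,a,\tau}$ to a node $v$ exactly when the subtree rooted at $v$ has root-label $a$ and admits a run from $q$ via transition $\tau$. This goes by induction from the leaves up and crucially uses the accumulative nature of the maximal typing: once each child $v'$ carries every $t_{q',a',\tau'}$ consistent with its subtree, the $\MAYBE$-clauses in the parent's definition let us realize precisely one run-choice per child. Conversely, if the tree is $A$-rejected, the maximal typing fails to assign a root type and the graph is not in $L(K)$. The hardest part will be the careful design of the junk types so that (i) every non-tree $G\in L(H)$ lies in $L(K)$, while (ii) no legal tree encoding rejected by $A$ is inadvertently absorbed; mixing the junk-typing machinery with the run-typing machinery without cross-contamination requires labeling disciplined with auxiliary edge labels reserved for each purpose. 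Once this is in place, $H\subseteq K$ is equivalent to $A$ accepting every $\Sigma$-labeled binary tree, yielding $\EXP$-hardness.
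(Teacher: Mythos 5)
Your reduction follows the same route as the paper's: a polynomial reduction from an \EXP-complete problem for nondeterministic top-down tree automata, recovering the automaton's nondeterminism from the implicit disjunction in the node signature by listing all alternatives as $\MAYBE$-clauses, and adding machinery for members of $L(H)$ that are not tree encodings. The first two ingredients are sound; the paper uses the same $\MAYBE$-based disjunction trick elsewhere (cf.\ the rules of the form $\spec{L}\dbl{}\spec{s}^{\MAYBE}\shuffle\spec{L}\dbl{}\spec{s}'^{\MAYBE}\shuffle\cdots$ in the proof of Lemma~\ref{lemma:shex-zero-small-example}), and the paper explicitly flags the third point as the non-trivial part.

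That third point is where your argument has a genuine gap. Junk types that absorb nodes whose \emph{local} outbound structure is malformed do not touch the real difficulty: $\ShExZero$ constrains only outgoing edges, so $L(H)$ unavoidably contains graphs in which every node is locally a perfect tree-encoding position and yet the graph is not a tree --- nodes shared by several parents and, above all, directed cycles. (In-degree and ``extra in-edges'' are invisible to a shape graph, so junk types cannot even be triggered by them.) For a DAG your statement still makes sense because each node has a finite unfolding, but for a cyclic $G\in L(H)$ the unfolding is infinite and your correctness proof --- an induction from the leaves up --- has no base case. Concretely, to establish the direction ``$A$ universal $\Rightarrow L(H)\subseteq L(K)$'' you must show that every locally well-formed cyclic graph admits a valid $K$-typing, i.e.\ that a universal automaton admits a consistent transition-labelling of the (infinite) unfolding of every such graph. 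This does not follow from acceptance of all finite trees by your induction; it needs a separate compactness/K\"onig's-lemma step (extract an infinite consistent run from accepting runs on finite trees agreeing with the unfolding up to depth $n$, then project the labelling back onto the graph, using the fact that a typing is a relation and may assign several types to one node). Until that is supplied, the reduction as stated could wrongly yield $H\not\subseteq K$ for a universal automaton because of a cyclic witness, so the claimed equivalence with universality is not established.
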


The bound on the size of the counter example (Theorem~\ref{thm:shex-zero-exp-example}) yields immediately a \coNEXP upper bound. 
\begin{theorem}
  Containment for $\ShExZero$ is in \coNEXP.
\end{theorem}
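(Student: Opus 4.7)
The plan is to upgrade the bound on the counter-example size from Theorem~\ref{thm:shex-zero-exp-example} into a \coNEXP\ decision procedure by giving an \NEXP\ algorithm for the complement (non-containment). Given $H,K\in\ShExZero$, the algorithm guesses a candidate simple graph $G$ and then verifies both $G\in L(H)$ and $G\notin L(K)$. For the guess to fit in exponential time, it suffices to restrict to $G$ whose size is at most the exponential bound guaranteed by Theorem~\ref{thm:shex-zero-exp-example}: whenever $H\nsubseteq K$, such a $G$ exists, so completeness of the nondeterministic search is immediate, and soundness is trivial since any witnessed $G\in L(H)\setminus L(K)$ proves $H\nsubseteq K$.

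The verification stage is where we use the fact, recalled in the introduction, that validation for $\ShExZero=\ShEx(\RBEZero)$ is tractable: membership for $\RBEZero$ is in \PTIME, and the maximal typing $\Typing_{G:S}$ can be computed in time polynomial in $|G|+|S|$. Thus both $G\in L(H)$ (i.e.\ $\dom(\Typing_{G:H})=N_G$) and $G\notin L(K)$ (i.e.\ $\dom(\Typing_{G:K})\neq N_G$) are testable in time polynomial in $|G|+|H|+|K|$. Since $|G|$ is bounded exponentially in $|H|+|K|$, the entire verification runs in time exponential in $|H|+|K|$, which keeps the whole nondeterministic algorithm within \NEXP.

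Putting the pieces together yields a nondeterministic exponential-time procedure that accepts exactly the instances with $H\nsubseteq K$, so containment for $\ShExZero$ is in \coNEXP. There is no real obstacle once Theorem~\ref{thm:shex-zero-exp-example} is in hand: the only mild subtleties are to be explicit about the encoding so that a graph of exponential size can indeed be guessed in \NEXP, and to invoke the polynomial-time validation algorithm for $\ShExZero$ rather than the $\NP$-complete one for arbitrary \ShEx, since applying the latter on an exponentially large $G$ would blow the budget. Both points are standard.
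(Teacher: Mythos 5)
Your proposal is correct and matches the paper's intent exactly: the paper gives no separate proof, stating only that the exponential counter-example bound of Theorem~\ref{thm:shex-zero-exp-example} ``yields immediately'' the \coNEXP upper bound, which is precisely the guess-and-verify argument you spell out. Your attention to using the deterministic polynomial-time validation for $\ShExZero$ (so that the non-membership test $G\notin L(K)$ stays within the nondeterministic exponential budget) is the right detail to make explicit.
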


\section{Shape Expression Schemas}
\label{sec:shex}
In this section, we address the question of decidability of containment for
\ShEx, which is far from obvious as \ShEx caries some expressive power of
$\exists\text{MSO}$ on graphs combined with Presburger
arithmetic\cite{SBLGHPS15} and monadic extensions of Presburger arithmetic
easily become undecidable~\cite{ElRa66}. We show that a triple-exponential upper
bound on the size of a counter-example, which can be compressed to
double-exponential size. The compression does not change the complexity of
validation, which permits us to give an upper bound on the complexity of testing
containment for \ShEx.


\subsection{Counter-example}

\paragraph{Compression}
Simple graphs do not allow multiple edges with the same label between the same
pair of nodes. We propose a model that allows it by attaching to every edge a
\emph{cardinality} indicating the number of such edges. More precisely, a
\emph{singleton} interval is an interval of the form $[k;k]$ for any natural
$k$, and a \emph{compressed} graph is a graph that uses only singleton intervals
on its edges and like simple graphs allows only one edge per label in $\Sigma$
between a pair of nodes. Given a compressed graph $\F$, its \emph{unpacking} is
a simple graph obtained by making a sufficient number of copies of each node,
each copy has the same outbound neighborhood but receiving at most one incoming
edge (cf. appendix). Since intervals are stored in binary, the unpacking of a
compressed graph $\F$ is of size at most exponential in the size of $\F$.
\begin{proposition}
  The size of the unpacking of a compressed graph $\F$ is at most exponential in
  the size of $\F$.
\end{proposition}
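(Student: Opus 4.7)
The plan is to bound both the number of nodes and the number of edges of the unpacking by $2^{\poly(|\F|)}$. First, I would observe that every cardinality $[k;k]$ occurring in $\F$ satisfies $k \leq 2^{|\F|}$, since $k$ is encoded in binary within $\F$. Set $K = 2^{|\F|}$, $n = |N_\F|$, and $E = |E_\F|$.

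Next, let $m_v$ denote the number of copies of $v \in N_\F$ produced by the unpacking. By the definition---each copy has the same outbound neighborhood and receives at most one incoming edge---the copies of $v$ are fed by expanding each edge $(u,v) \in E_\F$ of cardinality $k_{(u,v)}$ into $k_{(u,v)}$ incoming edges per copy of $u$. This yields the recurrence
\[
m_v \leq \max\Big(1,\ \sum_{(u,v) \in E_\F} m_u \cdot k_{(u,v)}\Big).
\]
Unrolling along the depth of $v$ in $\F$, I obtain $m_v \leq (nK)^n = 2^{O(|\F|^2)}$, so the unpacking has at most $n \cdot 2^{O(|\F|^2)}$ nodes. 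Each copy of $v$ has out-degree $\sum_{(v,u) \in E_\F} k_{(v,u)} \leq E \cdot K$, so the total edge count is also $2^{O(|\F|^2)}$, yielding an exponential bound.

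The main obstacle I foresee is making precise sense of this depth unrolling when $\F$ contains cycles whose edges have cardinality greater than one, since then the naive recurrence has no finite solution. I would need to appeal to the precise appendix definition of unpacking---which presumably either restricts such cycles or unpacks them into a finite quotient structure---and adapt the counting accordingly; once that is settled, the exponential bound follows directly from the recurrence estimate.
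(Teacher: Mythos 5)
Your argument hinges on the recurrence $m_v \leq \max\big(1,\ \sum_{(u,v)\in E_\F} m_u\, k_{(u,v)}\big)$, i.e., on reading ``each copy receives at most one incoming edge'' as a global in-degree bound and then unrolling by depth. This is exactly where the proof breaks, and you have located the break yourself: compressed graphs in this paper are not acyclic. They arise as quotients of a counter-example by \kind, so cycles and self-loops are the typical case, not a corner case; and under the literal in-degree-one reading, a node with a self-loop of cardinality $k\geq 2$ admits \emph{no} finite unpacking at all (the recurrence forces $m_v \geq k\, m_v$). So the recurrence does not model the construction, and deferring the cyclic case to the appendix leaves the main case unproved. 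Even in the acyclic case the path-multiplicative bound $(nK)^n$ is a substantial overestimate.

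The point you are missing is that the blow-up is not multiplicative along paths, because the only real constraint is simplicity: a copy of $n$ carrying a compressed $a$-edge of cardinality $k$ to $m$ needs $k$ \emph{distinct} copies of $m$ as targets (same source and same label force distinct end points), but different copies of $n$ may reuse the same copies of $m$ as targets, since those edges differ in their origin. Hence it suffices to create $\max(1,\max_e k_e) \leq 2^{|\F|}$ copies of \emph{every} node of $\F$; each copy reproduces the outbound neighborhood of its original by routing its $k$ parallel $a$-edges to $k$ of the available target copies, and since \ShEx constrains only outbound neighborhoods, every copy satisfies exactly the types of its original. This yields at most $|N_\F|\cdot 2^{|\F|}$ nodes, each of out-degree at most $|E_\F|\cdot 2^{|\F|}$, so the unpacking is singly exponential with no recurrence and no acyclicity assumption. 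Your one correct and essential ingredient --- that every cardinality satisfies $k\leq 2^{|\F|}$ because it is stored in binary --- is precisely the observation the paper's one-line justification rests on; the rest of your machinery is not needed and, as set up, cannot be completed.
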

We adapt the definition of validation of \ShEx to compressed graphs by extending
the definition of node signature. Given a shape expression schema $S$ and a
compressed graph $\F$, the \emph{signature} of a node $n\in N_\F$ w.r.t.\ a
typing $T\subseteq N_\F\times \Gamma_S$ is
\[
  \sign_G^T(n)=
  \bigshuffle_{e\in\out_\F(n)}  
  \big(\big|_{t\in T(\target_\F(e))} \lab_\F(e)\dbl t\big)^{\occur_\F(e)}.
\]
Again, the typing $T$ is \emph{valid} iff
$L(\sign_G^T(n))\cap L(\delta_S(t))\neq\emptyset$ for every $(n,t)\in T$, there
exists a maximal valid typing of $\F$ w.r.t.\ $S$, and $\F$ satisfies $S$ if
$\dom(\Typing_{\F:S})=N_\F$. Naturally, if $\F$ satisfies $H$, then its
unpacking also satisfies $H$. Checking the satisfaction of \ShEx for compressed
graphs remains in \NP{} and to prove it we employ known results on Presburger
arithmetic that we present next.

\paragraph{Presburger Arithmetic}
The \emph{Presburger arithmetic} (PA) is the first-order logical theory of
natural numbers with addition that has decidable
satisfiability~\cite{Oppen78}. We point out that any natural number $n$ can be
easily defined with an existentially quantified formula of length linear in
$\log(n)$. Since we use PA formulas to define bags, we use a convenient
notation. When the set of symbols $\Delta=\{a_1,\ldots,a_k\}$ is known from the
context, a bag $w$ over $\Delta$ can be represented as a (Parikh) vector of $k$
natural numbers $\langle w(a_1), \ldots, w(a_k) \rangle$ and if a vector of
variables $\bar{x}$ is used to describe a bag over $\Delta$, we use elements of
$\Delta$ to index elements of $\bar{x}$: $x_a$ is $x_k$ for $k$ such that
$a_k=a$. Also, we we write $\varphi(w)$ to say that $\varphi$ is valid for $w$.

Because intersection is used to define satisfiability of a graph w.r.t.\ a
schema and intersection is easily expressed in Presburger arithmetic, we extend
\RBE with intersection $L(E_1\cap E_2)= L(E_1)\cap L(E_2)$. Now, for an regular
bag expression with intersection $E$ we recursively construct a formula
$\psi_E(x_1,\ldots,x_k, n)$ as follows.
\begin{small}
\begin{align*}
  \psi_\epsilon(\bar{x},n) 
  & {} \colonequals \textstyle\bigwedge_a x_a = 0
  \\
  \psi_{a}(\bar{x},n) 
  & {} \colonequals x_a = n \land 
    \textstyle\bigwedge_{b\neq a} x_b = 0
  \\
  \psi_{E^{[k;\ell]}}(\bar{x},n)
  & {} \colonequals (n=0 \land \textstyle\bigwedge_a x_a = 0) \lor{} \\
  & \phantom{{}\colonequals{}}(n > 0 \land \exists m.\ k\leq m \land m\leq\ell\land\psi_{E}(\bar{x},m))\\
  \psi_{E_1 \mid E_2}(\bar{x},n) 
  & {} \colonequals \exists \bar{x}_1, \bar{x}_2, n_1, n_2.\ 
    n = n_1 + n_2 \land{} \\ 
  & \phantom{{}\colonequals{}}\bar{x}=\bar{x}_1+\bar{x}_2 \land
    \psi_{E_1}(\bar{x}_1,n_1) \land \psi_{E_2}(\bar{x}_2,n_2) \\
  \psi_{E_1\shuffle E_2}(\bar{x},n)
  & {} \colonequals \exists \bar{x}_1, \bar{x}_2.\ 
    \bar{x} = \bar{x}_1 + \bar{x}_2 \land
    \psi_{E_1}(\bar{x}_1,n) \land{}\\
  & \phantom{{}\colonequals{}}\psi_{E_2}(\bar{x}_2,n)\\
  \psi_{E_1\cap E_2}(\bar{x},n) 
  & {}\colonequals \psi_{E_1}(\bar{x},n) \land \psi_{E_2}(\bar{x},n) 
\end{align*}%
\end{small}%
The main claim, proven with a simple induction, is that $\psi_E(w, n)$ iff
$w\in L(E)^n$ for any bag $w$ over $\Delta$ and any $n\geq 0$. It follows that
$L(E)\neq\emptyset$ iff $\exists \bar{x}. \psi_E(\bar{x},1)$ is valid. Validity
of existentially quantified PA formulas is known to be in NP~\cite{Gradel87},
and consequently, we obtain an upper bound on complexity of validation of
compressed graphs.
\begin{proposition}
\label{prop:validation-compressed-graphs-shex}
Validation of compressed graphs w.r.t.\ \ShEx is in \NP.   
\end{proposition}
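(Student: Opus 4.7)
The plan is to show membership in \NP{} by a standard guess-and-check argument, using the Presburger arithmetic encoding constructed above to perform the check in polynomial time. First I would nondeterministically guess a typing $T\subseteq N_\F\times\Gamma_S$; since $|N_\F|$ and $|\Gamma_S|$ are polynomial in the input, the guess itself has polynomial size. Then it remains to verify deterministically in polynomial time that (i) $\dom(T)=N_\F$ and (ii) for every $(n,t)\in T$, the intersection $L(\sign_\F^T(n))\cap L(\delta_S(t))$ is nonempty.

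For condition~(ii), I would form, for each pair $(n,t)\in T$, the regular bag expression with intersection
\[
  E_{n,t} \;\colonequals\; \sign_\F^T(n) \,\cap\, \delta_S(t),
\]
and then construct the Presburger formula $\exists\bar{x}.\;\psi_{E_{n,t}}(\bar{x},1)$ as defined in the excerpt. By the main claim about $\psi_E$ (that $\psi_E(w,n)$ holds iff $w\in L(E)^n$), this formula is valid precisely when $L(E_{n,t})\neq\emptyset$. Validity of existentially quantified PA formulas is in \NP{} by~\cite{Gradel87}, so all that is needed is to check that $\psi_{E_{n,t}}$ has size polynomial in the input.

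The main obstacle, and the only real point to address, is the polynomial size of $\psi_{E_{n,t}}$ despite the fact that the cardinalities $\occur_\F(e)$ are stored in binary and may be exponentially large. The critical observation is that the signature is written in the compressed form
\[
  \sign_\F^T(n)=\bigshuffle_{e\in\out_\F(n)}\Bigl(\bigl|_{t\in T(\target_\F(e))}\lab_\F(e)\dbl t\Bigr)^{\occur_\F(e)},
\]
so as an expression it already has size polynomial in $|\F|+|S|$: there is one shuffle factor per outgoing edge, each inner disjunction ranges over $T(\target_\F(e))\subseteq\Gamma_S$, and each exponent $\occur_\F(e)=[k;k]$ contributes only $O(\log k)$ symbols in binary. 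Since $\delta_S(t)$ is part of the input and the clause for $\psi_{E^{[k;\ell]}}$ uses the bounds $k,\ell$ as numerical constants (which in PA are encoded with formulas of length linear in $\log k$ and $\log\ell$), a straightforward structural induction shows $|\psi_E|=O(|E|)$ with only a polynomial overhead for binary constants. Hence $\psi_{E_{n,t}}$ is polynomial, its existential closure is decidable in \NP, and checking all $|T|$ such formulas together with $\dom(T)=N_\F$ stays in \NP.
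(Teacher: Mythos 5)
Your proposal is correct and follows essentially the same route as the paper: the paper's (largely implicit) argument is exactly to guess a valid typing with full domain and to decide each nonemptiness test $L(\sign_\F^T(n))\cap L(\delta_S(t))\neq\emptyset$ via the existentially quantified Presburger formula $\exists\bar{x}.\,\psi_{E}(\bar{x},1)$, invoking the \NP{} bound of~\cite{Gradel87}. Your added observations --- that any valid typing with $\dom(T)=N_\F$ suffices because the maximal typing extends it, and that the binary-encoded cardinalities $\occur_\F(e)=[k;k]$ only contribute $O(\log k)$ to the formula size --- are exactly the details the paper leaves to the reader.
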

The following result is instrumental in our analysis of upper bounds of a
counter-example for \ShEx.
\begin{proposition}[\cite{Weis90}]
  \label{prop:pa-bound}
  Let $\Phi=Q_1\bar{x}_1\ldots Q_k\bar{x}_k.\varphi$ be a closed formula of
  Presburger arithmetic in prenex normal form with $k$ quantifier alternations
  over the variables $\bar{x}=\bar{x}_1\cup\ldots\cup\bar{x}_k$ ($\varphi$ is
  quantifier-free). Then $\Phi$ is valid if and only if $\Phi$ is valid when
  restricting the first-order variables of $\Phi$ to be interpreted over
  elements of $\{0,\ldots,B\}$, where $\log(B)=\smash{O(|\varphi|^{3|\bar{x}|^k})}$.
\end{proposition}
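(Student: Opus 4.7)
The plan is to prove this by induction on the number $k$ of quantifier alternations, combining standard small-model bounds for quantifier-free Presburger formulas with a parametric argument that lifts such bounds through each alternation.

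For the base case ($k=1$), assume without loss of generality that $\Phi = \exists \bar{x}.\varphi$ with $\varphi$ quantifier-free (the $\forall$ case follows by negation and swapping the roles of validity and satisfiability). Putting $\varphi$ into disjunctive normal form, $\Phi$ is valid iff some disjunct, a conjunction of linear (in)equalities with integer coefficients drawn from $\varphi$, is satisfiable. Classical bounds on minimal solutions of integer linear systems (Papadimitriou, von zur Gathen--Sieveking) guarantee that any satisfiable such system admits a solution in $\{0,\ldots,B_1\}$ with $\log(B_1) = O(|\varphi|^{c})$ for a small constant $c$ depending on how coefficient length, number of constraints, and dimension contribute. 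This matches the target bound for $k=1$, since $3|\bar{x}|^1$ absorbs the constant $c$ once $|\varphi|$ is large enough.

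For the inductive step, write $\Phi = Q_1\bar{x}_1.\Psi(\bar{x}_1)$ where $\Psi$ has $k-1$ alternations and $\bar{x}_1$ is treated as a tuple of parameters. Apply the induction hypothesis \emph{uniformly} in $\bar{x}_1$: for every assignment $\bar{a}$, whether $\Psi(\bar{a})$ holds is determined by the behaviour of the inner variables over a bounded domain, and the crucial observation is that witnesses for the inner quantifiers can be chosen as bounded linear functions of $\bar{a}$ rather than by naive bounded search on $\bar{a}$. One then applies the base-case bound to the outer block, now over a formula whose ``effective size'' has grown by a multiplicative factor that is polynomial in $|\bar{x}|$. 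Tracking this growth through $k$ compositions yields $\log(B_k) = O(|\varphi|^{3|\bar{x}|^k})$.

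The main obstacle is precisely this parametric step: a naive composition of small-model bounds would give a tower of exponentials of height $k$, not the claimed single exponential with exponent $|\varphi|^{3|\bar{x}|^k}$. Avoiding the tower requires showing that each inner witness can be taken from a \emph{polytope} whose description length is polynomial in the outer parameters, so that each alternation multiplies the exponent only by a factor of roughly $|\bar{x}|^3$ instead of exponentiating. This amortisation, essentially Weispfenning's refinement of Cooper-style quantifier elimination, is the technical heart of the proof and the place where the exponent $3|\bar{x}|^k$ is actually produced; all other ingredients (DNF expansion, Diophantine bounds, duality for $\forall$) are standard.
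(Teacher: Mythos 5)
The paper does not prove this proposition at all: it is imported verbatim from Weispfenning~\cite{Weis90} as a black box, so there is no ``paper proof'' to compare against. Judged on its own terms, your proposal is a reasonable roadmap but it is not a proof, because the step you yourself identify as ``the technical heart'' is asserted rather than established. The claim that witnesses for the inner quantifier blocks can be chosen as bounded linear functions of the outer parameters $\bar{a}$ --- uniformly, with description length polynomial in the outer data, so that the exponent grows multiplicatively instead of the bound growing as a tower --- \emph{is} the theorem. Deferring it to ``essentially Weispfenning's refinement of Cooper-style quantifier elimination'' reproduces exactly what the paper does (cite \cite{Weis90}) while presenting it as if the argument had been carried out. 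Nothing in your inductive step explains why such uniform witnesses exist or how their coefficients are bounded; without that, the induction does not close.

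Two smaller issues compound this. First, your bookkeeping does not reproduce the stated bound: if the base case contributes exponent $3|\bar{x}|$ and, as you claim, each alternation multiplies the exponent by ``roughly $|\bar{x}|^3$,'' you obtain $3|\bar{x}|^{3k-2}$, not $3|\bar{x}|^k$; the exponent $3|\bar{x}|^k$ requires each alternation to contribute only a factor of $|\bar{x}|$, and that factor has to come out of the uniform-witness argument you have not given. Second, the base case glosses over the divisibility/congruence atoms that genuine Presburger quantifier elimination introduces; a DNF of linear inequalities plus the von zur Gathen--Sieveking bound handles pure integer linear systems, but the parametric step forces you into the full Presburger setting where congruences with moduli depending on the eliminated coefficients appear, and these must be tracked in the size analysis. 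To make this a proof you would either have to carry out Weispfenning's bounded quantifier elimination explicitly or simply cite \cite{Weis90}, as the paper does.
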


\paragraph{Compressed counter-example}
We now assume two schemas $H$ and $K$ such that $H\nsubseteq K$ and take any
counter-example $G\in L(H)\minus L(K)$. We know that there is at least one node
of that satisfies at least one type of $H$ but no type of $K$. In general, for a
node $n$ of $G$ we identify a pair $(T,S)$ consisting of a set $T$ of types of
$H$ and a set of types $S$ of $K$ that $n$ satisfies. We say that the node $n$
is of the \emph{$(T,S)$-kind} and we identify the set $\C$ of all kinds present
in $G$.
\begin{align*}
  &\kind(n) = (\Typing_{G:H}(n),\Typing_{G:K}(n)),\\
  &\C = \{\kind(n) \mid n\in N_G\}.
\end{align*}
Shape expression schemas may only inspect the labels of the outgoing edges of a
node and the types of the nodes at the end points of the edges. Consequently, if
we replace a node by a node of the same kind, or more precisely we redirect all
incoming edges of the first node to the other node, then the types of node in
the graph do not change, in particular, it remains a
counter-example. Furthermore, we can fuse the set of all nodes of the same class
into a single node that belongs to the same class, and still obtain a graph that
is a counter-example. When fusing several nodes we gather the incoming edges
into a fused node but for the outgoing edges we use only the outgoing edges of a
one (arbitrarily chosen) of the fused nodes, while discarding the outgoing edges
of the remaining nodes. We point out that the obtained graph needs not longer to
be simple, fusing a set of nodes may lead to several incoming edges with the
same label originating from the same node. Such multiple edges can, however, be
easily compressed to a single one.

We describe the construction of the compressed counter-example $\F$ more
precisely. First for every kind $\kappa\in\C(G)$ we pick an (arbitrarily chosen)
representative node $n_\kappa\in G$ such that $\kind(n_\kappa) = \kappa$. The
set of nodes of $\F$ is the set of all classes of $G$, $N_\F=\C(G)$. For every
edge connecting two representative nodes $\F$ has a corresponding edge:
\begin{multline*}
E_\F = \{ \langle \kappa,a,\kappa'\rangle \mid 
\exists e\in E_G.\ 
\source_G(e) = n_C,\\
\target_G(e) = n_{C'},\ 
\lab_G(e) = a
\}
\end{multline*}
and for $\langle\kappa,a,\kappa'\rangle\in E_\F$
\begin{align*}
  &\source_\F(\langle\kappa,a,\kappa'\rangle) = \kappa,&
  &\lab_\F(\langle\kappa,a,\kappa'\rangle) = a,\\
  &\target_\F(\langle\kappa,a,\kappa'\rangle) = \kappa',&
  &\occur_\F(\langle\kappa,a,\kappa'\rangle) = [k;k],
\end{align*}
where
\begin{small}
\[
  k = |\{e\in \out_G(n_\kappa) \mid \lab_G(e)=a,\ \kind(\target_G(e)) = \kappa'\}|.
\]
\end{small}%
This shows that a compressed counter-example needs to have at most an
exponential number of nodes. 

\paragraph{Bounding the node degree}
The remaining question is how big the cardinalities of the edges of $\F$ must
be. We answer this question with the help of Proposition~\ref{prop:pa-bound} by
describing the outbound neighborhood of a node of $\F$ with Presburger
arithmetic formula. 

For the kind $(T,S)\in\C$ the formula $\Phi_{(T,S)}$ examines the existence of
an outbound neighborhood of a node of that kind that satisfies all types in $T$
and all types in $S$. This neighborhood is captured as a bag $\bar{x}$ over
$\Delta_\C=\{a\dbl(T',S') \mid a\in\Sigma,\ (T',S')\in\C\}$, where an occurrence
of the symbol $a\dbl{}(T',S')$ corresponds to one outgoing edge labeled with $a$
and leading to a node of the kind $(T',S')$.
\begin{align*}
  \Phi_{(T,S)} \colonequals \exists \bar{x}.\ 
  &
    \bigwedge_{t\in T} \varphi_t(\bar{x}) \land 
    \bigwedge_{\mathclap{t\in \Gamma_H\minus T}} \neg \varphi_t(\bar{x}) \land{}\\
  &
    \bigwedge_{s\in S} \varphi_s(\bar{x}) \land 
    \bigwedge_{\mathclap{s\in \Gamma_K\minus S}} \neg \varphi_s(\bar{x}).
\end{align*}
The formulas $\varphi_t(\bar{x})$ and $\varphi_s(\bar{x})$ verify whether the
types $t$ of $H$ and $s$ of $K$ are satisfied in this neighborhood. This is done
in two phases and we present it only for $\varphi_t(\bar{x})$; the formula
$\varphi_s(\bar{x})$ is defined analogously. The variable $x_{a\dbl{}(T',S')}$
represents the number of outgoing edges with label $a$ to nodes that satisfy all
types in $T'$. In the context of satisfying definition of the type $t$ each
outgoing edge is used with exactly one type. Consequently, the next formula
partitions the number of  outgoing edges $x_{a\dbl{}(T',S')}$ into all types
in $T'$. Here, we use a vector $\bar{y}$ of variables over
$\{a\dbl(T',S')\rightarrow a\dbl t' \mid a\in\Sigma,\ (T',S')\in\C,\ t'\in
T'\}$, where $y_{a\dbl(T',S')\rightarrow a\dbl t'}$ represents the part of
$x_{a\dbl{}(T',S')}$ edges that is to be used with the type $t'$.
\begin{small}
\begin{align*}
  &\varphi_t(\bar{x})\colonequals
    \exists \bar{y}.\ 
    \bigwedge_{\mathclap{a\dbl{}(T',S')\in\Delta_\C}}
    x_{a\dbl{}(T',S')}=\sum_{t'\in T'} y_{a\dbl{}(T',S')\rightarrow a\dbl t'}
    \land \varphi_t'(\bar{y}).
\end{align*}
\end{small}%
Finally, the edges with the same label and type of the end point are aggregated
in the vector $\bar{z}$ representing a bag over $\Delta_H=\Sigma\times\Gamma_H$,
which is then is fed to the formula $\psi_{\delta_H(t)}$ that defines the type
definition of $t$.
\begin{small}
\begin{align*}
  &\varphi_t'(\bar{y}) \colonequals 
    \exists \bar{z}.\ 
    \bigwedge_{\mathclap{a\dbl{} t'\in\Delta_H}}
    z_{a\dbl t'} = 
    \sum_{\mathclap{\substack{a\dbl(T',S')\in\Delta_\C\\\text{s.t. $t'\in T'$}}}} 
  y_{a\dbl(T',S')\rightarrow a\dbl t'} 
    \land\psi_{\delta_H(t)}(\bar{z}).
\end{align*}
\end{small}%
The formula $\Phi_{(T,S)}$ can be easily converted to prenex normal form, and
then, it is of exponential length, uses an exponential number of quantified
variables, and has only one alternation of quantifiers. Since $\Phi_{(T,S)}$ is
valid for any $(T,S)\in\C$, the satisfying values for the variables $\bar{x}$
are bound by a triple exponential, and consequently, have a binary
representation whose size is bounded by double-exponential function in the size
of $H$ and $K$.
\begin{theorem}
  \label{thm:counter-example-shex}
  For any two \ShEx $H$ and $K$, if $H \nsubseteq K$, then there exists a
  compressed graph $\F$ that satisfies $H$, does not satisfy $K$, and whose size
  is at most double-exponential in the size of $H$ and $K$.
\end{theorem}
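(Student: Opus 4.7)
Plan: Realize in two stages the construction sketched between Proposition~\ref{prop:validation-compressed-graphs-shex} and the theorem. Starting from any finite counter-example $G\in L(H)\minus L(K)$, I would first fuse nodes of equal kind to obtain a compressed graph $\F$ with at most $|\C|\leq 2^{|\Gamma_H|+|\Gamma_K|}$ nodes, and then rewrite each outgoing neighborhood of $\F$ by a small Presburger witness to obtain the promised doubly-exponential cardinality bound.

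For the fusion stage, pick one representative $n_\kappa$ per kind and build $\F$ exactly as described in the excerpt. The candidate typings $T^\F_H(\kappa)=T$ and $T^\F_K(\kappa)=S$ (for $\kappa=(T,S)$) are valid on $\F$, since the $\F$-signature of $\kappa$ under $T^\F_H$ equals the $G$-signature of $n_\kappa$ under $\Typing_{G:H}$ up to grouping of parallel edges, and the same holds for $K$. To see that they are the \emph{maximal} typings I would run a pull-back: any valid $T'\supseteq T^\F_H$ on $\F$ lifts to $T^G(m):=T'(\kind(m))$ on $G$, which by monotonicity of the signature operator is still valid on $G$; maximality of $\Typing_{G:H}$ then forces $T'=T^\F_H$, and similarly for $K$. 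Consequently the kind $(T_0,\emptyset)$ of any $K$-typeless node of $G$ survives in $\F$, so $\F\in L(H)\minus L(K)$.

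For the compression stage, fix $\kappa=(T,S)\in\C$ and consider $\Phi_{(T,S)}$, whose truth is witnessed by the outgoing neighborhood of $\kappa$ in $\F$. Converted to prenex normal form, $\Phi_{(T,S)}$ has length exponential in $|H|+|K|$, an exponential number of quantified variables, and a single quantifier alternation; Proposition~\ref{prop:pa-bound} then supplies a witness $\bar x^\kappa$ whose components are bounded by a triple-exponential, hence representable in doubly-exponential binary length. Building $\F^\star$ on the node set $\C$ with outgoing neighborhoods prescribed by the $\bar x^\kappa$'s yields a compressed graph of the claimed size. The positive clauses of $\Phi_{(T,S)}$ make $T^\star_H(\kappa)=T$ and $T^\star_K(\kappa)=S$ valid on $\F^\star$, giving immediately $\F^\star\in L(H)$.

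The chief obstacle is the consistency verification after compression: each $\bar x^\kappa$ is chosen independently, and one must rule out a spurious enlargement of the maximal typing on $\F^\star$ coming from richer signatures at neighbouring kinds, which would threaten $\F^\star\notin L(K)$. I would handle this by re-running the pull-back from the fusion stage: a strict enlargement $T'\supsetneq T^\star_K$ on $\F^\star$ lifts to $T^G(m):=T'(\kind(m))$ on $G$, and using monotonicity of the signature operator together with the fact that the negative clauses $\neg\varphi_s(\bar x^\kappa)$ forbid additional satisfaction under the intended target typing, a Knaster--Tarski descent shows $T^G$ to be valid on $G$, contradicting the maximality of $\Typing_{G:K}$. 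Hence $\Typing_{\F^\star:K}=T^\star_K$, so the surviving kind $(T_0,\emptyset)\in\C$ witnesses $\F^\star\notin L(K)$ and the size bound is met.
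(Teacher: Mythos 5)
Your two-stage plan --- fuse nodes of equal kind into a compressed graph on $\C$, then shrink the edge cardinalities via the Presburger encoding and Proposition~\ref{prop:pa-bound} --- is exactly the paper's, and your handling of $\F^\star\in L(H)$ (the kind-typing is valid because the positive clauses of $\Phi_{(T,S)}$ hold at the chosen witnesses) and of the size accounting is fine. You also correctly single out the delicate point: one must show that the maximal $K$-typing of the modified graph does not exceed the kind-typing, so that the kind $(T_0,\emptyset)$ still witnesses non-satisfaction of $K$.

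But the pull-back you offer for that point does not work. Validity of a typing $T'$ at the node $\kappa$ of $\F$ only certifies that the \emph{representative's} neighbourhood supports the types in $T'(\kappa)$; the other nodes of kind $\kappa$ in $G$ have different neighbourhoods, and monotonicity of the signature operator says nothing about the new types in $T'(\kappa)\minus S$, which were never certified at those nodes. Concretely: let $K$ consist of the single rule $\spec{s}\rightarrow \spec{a}\dbl\spec{s}$, let $H$ be $\spec{t}\rightarrow \spec{a}\dbl\spec{t}^\MANY$, and let $G$ be the two-node path $m_1\xrightarrow{\spec{a}}m_2$. Then $G\in L(H)\minus L(K)$ and both nodes have kind $(\{\spec{t}\},\emptyset)$, so $\C$ is a singleton; fusing with $m_1$ as representative yields one node with an $\spec{a}$-self-loop of cardinality $1$, on which $\{(\kappa,\spec{s})\}$ is a valid typing. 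The fused graph satisfies $K$ and is no longer a counter-example, and the lifted typing $\{(m_1,\spec{s}),(m_2,\spec{s})\}$ is not valid on $G$ (it fails at $m_2$). The same circularity undermines your ``Knaster--Tarski descent'' for $\F^\star$: the clauses $\neg\varphi_s(\bar x^\kappa)$ are evaluated under the intended typing of the targets and give no information once a neighbouring kind acquires an extra type --- which is precisely the scenario you must exclude. So the step $\F^\star\notin L(K)$ remains unproved; it needs either a finer equivalence than ``same pair of type sets'' or a careful, non-arbitrary choice of representatives and witnesses, not the lift as stated.
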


\subsection{Complexity} 
Very recently containment for \RBE has been shown to be
\coNEXP-complete~\cite{HaHo16}, and consequently, we obtain the following lower
bound.
\begin{proposition}
  Containment for \ShEx is \coNEXP-hard.
\end{proposition}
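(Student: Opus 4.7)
The plan is to reduce the \coNEXP-complete containment problem for \RBE, established by Haase and Hoffmann, to containment for \ShEx. Given two regular bag expressions $E_1$ and $E_2$ over an alphabet $\Sigma$, I will construct in polynomial time two shape expression schemas $S_1$ and $S_2$ such that $L(E_1) \subseteq L(E_2)$ iff $S_1 \subseteq S_2$, which immediately yields \coNEXP-hardness for \ShEx.

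The construction uses three type names $t_r$, $t_\epsilon$, $t_m$ (each schema has its own copy) and a fresh symbol $R \notin \Sigma$. For each $i \in \{1,2\}$, the schema $S_i$ has the rules $t_r \rightarrow R \dbl t_m \shuffle \hat{E}_i$, $t_\epsilon \rightarrow \epsilon$, and $t_m \rightarrow \epsilon$, where $\hat{E}_i$ is obtained from $E_i$ by syntactically replacing every occurrence of $a \in \Sigma$ with $a \dbl t_\epsilon$. The intended canonical graph encoding a bag $w$ over $\Sigma$ is a \emph{star} $G_w$ whose central node has exactly one $R$-edge to a distinguished marker leaf and, for every $a \in \Sigma$, exactly $w(a)$ fresh $a$-edges to pairwise distinct leaves.

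For the forward direction, if $L(E_1) \subseteq L(E_2)$, then any $G \in L(S_1)$ also lies in $L(S_2)$: a node assigned type $t_r$ in $S_1$ must have a bag-signature (over its $\Sigma$-children) in $L(E_1) \subseteq L(E_2)$ and hence satisfies $t_r$ in $S_2$ as well, whereas a node with no outgoing edges satisfies $t_\epsilon$ and $t_m$ in either schema. For the reverse direction, given any $w \in L(E_1) \minus L(E_2)$, the star $G_w$ satisfies $S_1$ (its root matches $t_r$ while its leaves match $t_\epsilon$ or $t_m$) but fails $S_2$: the root has a non-empty signature so it cannot satisfy $t_\epsilon$ or $t_m$, and its $\Sigma$-part of the signature is exactly $w \notin L(E_2)$, so it cannot satisfy $t_r$ in $S_2$ either.

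The only delicate point I anticipate is handling the empty bag $w = \epsilon$. Without the marker edge, the ``star'' for the empty bag would be a single isolated vertex, which trivially satisfies $t_\epsilon$ in both schemas and therefore fails to separate $S_1$ from $S_2$. The component $R \dbl t_m$ in the definition of $t_r$ forces the root of every satisfying graph to have a non-empty outbound neighborhood, so even the empty bag is encoded by a proper two-node graph carrying a single $R$-edge, and the argument goes through uniformly across all $w$. Aside from this wrinkle, the reduction is syntactic, polynomial, and essentially routine.
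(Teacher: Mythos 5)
Your reduction from \RBE containment is exactly the route the paper takes (it cites the \coNEXP-completeness of \RBE containment and derives the bound ``consequently''), and your construction — embedding each $E_i$ as the definition of a root type over leaf types, with the extra $R$-marker edge to keep the empty bag from collapsing to an isolated node typeable by $t_\epsilon$ — is a correct instantiation of that reduction. The argument in both directions checks out, so this matches the paper's proof in substance while supplying the routine details the paper omits.
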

The upper bound follows from Theorem~\ref{thm:counter-example-shex} and
Proposition~\ref{prop:validation-compressed-graphs-shex}. A (universally)
nondeterministic Turing for an input pair $(H,K)$ guesses a compressed graph
$\F$ and uses an \NP\ oracle to verify that if $\F$ satisfies the schema $H$,
the it also satisfies the schema $K$. The input pair is accepted if the test is
passed on every computation path.
\begin{corollary}
  Containment for \ShEx is in $\coTwoNEXP^\NP$.
\end{corollary}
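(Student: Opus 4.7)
The plan is to show that the complementary problem, non-containment $H\nsubseteq K$, lies in $2\NEXP^\NP$, and then conclude by taking complements. The two ingredients already in hand are: Theorem~\ref{thm:counter-example-shex}, which guarantees that whenever $H\nsubseteq K$ there is a compressed counter-example $\F$ whose description length is bounded by $2^{2^{p(|H|+|K|)}}$ for some polynomial $p$; and Proposition~\ref{prop:validation-compressed-graphs-shex}, which states that testing whether a compressed graph satisfies a \ShEx is in \NP{} with respect to the size of the graph.

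Given these, I would describe an existentially nondeterministic machine $M$ that runs in double-exponential time and has access to an \NP{} oracle. On input $(H,K)$, the machine first guesses, in time bounded by $2^{2^{p(|H|+|K|)}}$, a compressed graph $\F$ of that size (written explicitly on its work tape, with edge cardinalities encoded in binary). It then needs to verify that $\F\in L(H)\minus L(K)$, i.e., that $\F$ satisfies $H$ while failing to satisfy $K$. By Proposition~\ref{prop:validation-compressed-graphs-shex}, the predicate ``$\F$ satisfies $S$'' is in \NP{} when measured in $|\F|$, so it is decidable by a single oracle query to an appropriate \NP{} language parametrised by $(\F,S)$; the first query confirms satisfaction of $H$, and the second query (whose negation we use) rules out satisfaction of $K$. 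The machine accepts iff both tests succeed. This places non-containment in $2\NEXP^\NP$, and hence containment in $\coTwoNEXP^\NP$.

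The only subtle point, which I expect to be the main thing to get right, is the interplay between the double-exponential size of $\F$ and the polynomial bound hidden inside the \NP{} oracle. The oracle queries will have instance size $\mathrm{poly}(|\F|+|H|+|K|)$, which is itself double-exponential in $|H|+|K|$; but this is fine because the machine already runs in $2\NEXP$ time and so can write the oracle query in full. What must be verified is that the witness used by the \NP{} oracle for compressed-graph validation is of size polynomial in its input (and not, say, blown up by a further exponential when intervals are encoded in binary): this is precisely the content of Proposition~\ref{prop:validation-compressed-graphs-shex}, whose proof reduces validation to existential Presburger arithmetic via the formulas $\psi_E$. With this accounting in place, no further complications arise, and the upper bound follows.
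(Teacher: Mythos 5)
Your proposal is correct and matches the paper's argument: the paper likewise combines Theorem~\ref{thm:counter-example-shex} with Proposition~\ref{prop:validation-compressed-graphs-shex}, using a (universally) nondeterministic double-exponential-time machine with an \NP{} oracle that ranges over compressed graphs $\F$ and checks on each branch that $\F$ satisfying $H$ implies $\F$ satisfying $K$. Your phrasing via the complement ($2\NEXP^\NP$ for non-containment, then complementation) is the same argument, and your extra care about oracle-query and witness sizes is a sound, if implicit, part of the paper's reasoning.
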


\section{Conclusions and Future Work}
\label{sec:conclusions}
This work has been prompted by our current work on data exchange for RDF and
schema reference for RDF, where not only do we ask the questions of type
implication but are also interested in instances satisfying constraints
expressed with the help of \ShEx. In this paper, we have considered \ShEx and
its two practical subclasses \ShExZero and \DetShExZeroMinus. While the precise
complexity of containment for \ShEx remains open, the complexity results we have
obtained, summarized in Figure~\ref{fig:summary}, provide a reasonably good
separation of the classes of schemas.
\begin{figure}[htb]
  \centering
  \renewcommand{\arraystretch}{1.5}
  \begin{tabular}{|c|c|c|}
    \hline
    \DetShExZeroMinus
    &
    \ShExZero
    &
    \ShEx\\
    \hline
    \multirow{2}{*}{\PTIME}
    &
      \EXP-hard
    &
      \coNEXP-hard
    \\
    &
      \coNEXP
    &
      $\coTwoNEXP^\NP$
    \\
    \hline
  \end{tabular}
  \caption{Summary of complexity results}
  \label{fig:summary}
\end{figure}
While determinism shows promises to allow reduction in complexity. For instance,
containment for \DetShEx is in \coTwoNEXP\ since validation for \DetShEx is in
\PTIME. But its precise impact on complexity of containment needs to be studied
further. the class of regular bag expression \textsf{DIME} that permits
restricted use of disjunction yet allows for tractable containment for schemas
for unordered XML~\cite{BoCiSt14j} and it would also be interesting to see if
there are any computational benefits that can be drawn for shape expression
schemas using \textsf{DIME}.


\clearpage
\onecolumn

\bibliographystyle{plain}
\bibliography{staworko}

\begin{thebibliography}{10}

\bibitem{AGP09}
M.~Arenas, C.~Gutierrez, and J.~P{\'e}rez.
\newblock Foundations of {RDF} databases.
\newblock In {\em Reasoning Web}, International Summer School on Semantic
  Technologies for Information Systems, pages 158--204, 2009.
\newblock Invited Tutorial.

\bibitem{APRRS11}
M.~Arenas, J.~P{\'e}rez, Reutter J., C.~Riveros, and J.~Sequeda.
\newblock Data exchange in the relational and {RDF} worlds.
\newblock International Workshop on Semantic Web Information Management (SWIM),
  June 2011.
\newblock Invited talk.

\bibitem{BeNeVa04}
G.~J. Bex, F.~Neven, and J.~Van~den Bussche.
\newblock {DTDs} versus {XML Schema}: A practical study.
\newblock In {\em WebDB}, pages 79--84, 2004.

\bibitem{BoCiSt14j}
I.~Boneva, R.~Ciucanu, and S.~Staworko.
\newblock Schemas for unordered {XML} on a {DIME}.
\newblock {\em Theoretical Computer Science (TCS)}, 57(2):337--376, 2015.

\bibitem{CeGoMa15}
S.~Cebiric, F.~Goasdou{\'{e}}, and I.~Manolescu.
\newblock Query-oriented summarization of {RDF} graphs.
\newblock In {\em British International Conference on Databases (BICOD)}, pages
  87--91, 2015.

\bibitem{tata07}
H.~Comon, M.~Dauchet, R.~Gilleron, C.~L{\"o}ding, F.~Jacquemard, D.~Lugiez,
  S.~Tison, and M.~Tommasi.
\newblock Tree automata techniques and applications.
\newblock Available on: {\tt http://www.grappa.univ-lille3.fr/tata}, 1997.
\newblock release 2007.

\bibitem{ElRa66}
C.~C. Elgot and M.~O. Rabin.
\newblock Decidability and undecidability of extensions of second (first) order
  theory of (generalized) successor.
\newblock {\em Journal of Symbolic Logic}, 31(2):169--181, 1966.

\bibitem{FeSu98}
M.~F. Fernandez and D.~Suciu.
\newblock Optimizing regular path expressions using graph schemas.
\newblock In {\em International Conference on Data Engineering (ICDE)}, pages
  14--23, 1998.

\bibitem{GoFrMa15}
Fran{\c{c}}ois Goasdou{\'e} and Ioana Manolescu.
\newblock Query-oriented summarization of {RDF} graphs.
\newblock {\em International Conference on Very Large Data Bases (VLDB)},
  8(12), 2015.

\bibitem{Gradel87}
Erich Gr{\"{a}}del.
\newblock {\em The complexity of subclasses of logical theories}.
\newblock PhD thesis, Universit{\"{a}}t Basel, 1987.

\bibitem{HaHo16}
C.~Haase and P.~Hofman.
\newblock Tightening the complexity of equivalence problems for commutative
  grammars.
\newblock In {\em Symposium on Theoretical Aspects of Computer Science
  (STACS)}, pages 41:1--41:14, 2016.

\bibitem{KlScSu03}
A.~Klarlund, T.~Schwentick, and D.~Suciu.
\newblock {XML}: Model, schemas, types, logics and queries.
\newblock In J.~Chomicki, R.~van~der Meyden, and G.~Saake, editors, {\em Logics
  for Emerging Applications of Databases}. Springer-Verlag, 2003.

\bibitem{KoTo10}
E.~Kopczynski and A.~To.
\newblock Parikh images of grammars: Complexity and applications.
\newblock In {\em Logic in Computer Science (LICS)}, pages 80--89, 2010.

\bibitem{ShExBook17}
J.~E. Labra~Gayo, E.~Prud'hommeaux, I.~Boneva, and D.~Kontokostas.
\newblock Validating {RDF} data.
\newblock {\em Synthesis Lectures on the Semantic Web: Theory and Technology},
  7(1):1--328, 2017.

\bibitem{GPBSSH15}
J.~E. Labra~Gayo, E.~Prud'hommeaux, I.~Boneva, S.~Staworko, H.~R. Solbrig, and
  S.~Hym.
\newblock Towards an {RDF} validation language based on regular expression
  derivatives.
\newblock In {\em EDBT/ICDT Workshops (GraphQ \& LWDM)}, pages 197--204, March
  2015.

\bibitem{GPSR15}
J.~E. Labra~Gayo, E.~Prud'hommeaux, H.~Solbrig, and J.~M. Alvarez~Rodriguez.
\newblock Validating and describing linked data portals using {RDF} {Shape
  Expressions}.
\newblock In {\em Workshop on Linked Data Quality}, September 2015.

\bibitem{Oppen78}
D.~C. Oppen.
\newblock A $2^{2^{2^{pn}}}$ upper bound on the complexity of presburger
  arithmetic.
\newblock {\em Journal of Computer and System Sciences}, 16(3):323--332, 1978.

\bibitem{SeScMu03}
H.~Seidl, T.~Schwentick, and A.~Muscholl.
\newblock Numerical document queries.
\newblock In {\em PODS}, pages 155--166, 2003.

\bibitem{SBLGHPS15}
S.~Staworko, I.~Boneva, J.~E. Labra~Gayo, S.~Hym, E.~G. Prud'hommeaux, and
  H.~Solbrig.
\newblock Complexity and expressiveness of {ShEx} for {RDF}.
\newblock In {\em International Conference on Database Theory (ICDT)}, pages
  195--211, March 2015.

\bibitem{Ve97}
M.~Veanes.
\newblock On computational complexity of basic problems of finite tree
  automata.
\newblock Technical Report 133, UPMAIL, January 1997.

\bibitem{W3CValidationWorkshopReport}
W3C.
\newblock {RDF} validation workshop report: Practical assurances for quality
  {RDF} data.
\newblock \verb|http://www.w3.org/2012/12/rdf-val/report|, September 2013.

\bibitem{ShEx2W3C}
W3C.
\newblock Shape expressions language 2.0, 2017.
\newblock \verb|http://www.w3.org/2013/ShEx/Primer|.

\bibitem{Weis90}
V.~Weispfenning.
\newblock The complexity of almost linear diophantine problems.
\newblock {\em Journal of Symbolic Computation}, 10(5):395--404, 1990.

\bibitem{ZhDuYuZh14}
H.~Zhang, Y.~Duan, X.~Yuan, and Y.~Zhang.
\newblock {ASSG}: adaptive structural summary for {RDF} graph data.
\newblock In {\em International Semantic Web Conference (ISWC)}, pages
  233--236, October 2014.

\end{thebibliography}

\end{document}